\newcommand{\cutforipco}[1]{{\color{red}#1}}
\renewcommand{\cutforipco}[1]{}
\title{Online Hypergraph Matching}
\date{}
\newcommand{\optLP}{\mathsf{OPT_{LP}}}
\newcommand{\opt}{\mathsf{OPT}}
\newcommand{\val}{\mathcal{V}}
\newcommand{\Exp}{\mathbb{E}}
\newcommand{\Var}{\operatorname{Var}}
\newcommand{\Alg}{\mathcal{A}}
\newcommand{\hfunc}{f}
\newcommand{\Match}{\mathcal{M}}
\newcommand{\numact}{r}
\newcommand{\Hyp}{\mathcal{H}}
\newcommand{\mesh}{\eta}
\newcommand{\1}{\mathbbm 1}
\newlist{proplist}{enumerate}{1} \setlist[proplist]{left=0pt, itemsep=1pt, label=Property \arabic*., ref=\arabic*}
\newcommand{\colorline}[1]{
    \vspace{-0.04cm}
\hspace{-0.22cm}\colorbox{gray!10}{\makebox[0.99\linewidth][l]{#1}}
}
\let\cref\crtcref
\author{Sander Borst}
\affil{Max Planck Institute for Informatics, Germany}
\author{Danish Kashaev}
\affil{Centrum Wiskunde \& Informatica, The Netherlands}
\author{Zhuan Khye Koh}
\affil{Boston University, USA}
\affil[ ]{\tt{sborst@mpi-inf.mpg.de,danish.kashaev@cwi.nl,zkkoh@bu.edu}}
\begin{document}

\title{Online Matching on 3-Uniform Hypergraphs\thanks{This project has received funding from the European Research Council (ERC) under the European Union’s Horizon 2020 research and innovation programme (grant agreement no. 805241--QIP). \\ 
S. Borst and Z. K. Koh --- Research was done while the authors were at CWI.}}
\maketitle   

\begin{abstract}
The online matching problem was introduced by Karp, Vazirani and Vazirani (STOC 1990) on bipartite graphs with vertex arrivals.
It is well-known that the optimal competitive ratio is $1-1/e$ for both integral and fractional versions of the problem.
Since then, there has been considerable effort to find optimal competitive ratios for other related settings.

In this work, we go beyond the graph case and study the online matching problem on $k$-uniform hypergraphs.
For $k=3$, we provide an optimal primal-dual fractional algorithm, which achieves a competitive ratio of $(e-1)/(e+1)\approx 0.4621$. 
As our main technical contribution, we present a carefully constructed adversarial instance, which shows that this ratio is in fact optimal. 
It combines ideas from known hard instances for bipartite graphs under the edge-arrival and vertex-arrival models.

For $k\geq 3$, we give a simple integral algorithm which performs better than greedy when the online nodes have bounded degree.
As a corollary, it achieves the optimal competitive ratio of 1/2 on 3-uniform hypergraphs when every online node has degree at most 2.
This is because the special case where every online node has degree 1 is equivalent to the edge-arrival model on graphs, for which an upper bound of 1/2 is known.
\end{abstract}
\section{Introduction}
Online matching is a classic problem in the field of online algorithms.
It was first introduced in the seminal work of Karp, Vazirani and Vazirani~\cite{karp1990optimal}, who considered the bipartite version with one-sided vertex arrivals.
In this setting, we are given a bipartite graph where vertices on one side are known in advance (offline), and vertices on the other side arrive sequentially (online).
When an online vertex arrives, it reveals its incident edges, at which point the algorithm must decide how to match it (or not) irrevocably.
The goal is to maximize the cardinality of the resulting matching.
Karp et al.~\cite{karp1990optimal} gave an elegant randomized algorithm {\sc Ranking}, which achieves the optimal competitive ratio of $1-1/e$.

In certain applications, each offline vertex may be matched more than once.
Examples include matching online jobs to servers, or matching online impressions to advertisers.
This is the online $b$-matching model of Kalyanasundaram and Pruhs~\cite{kalyanasundaram2000optimal}, where $b\geq 1$ is the maximum number of times an offline vertex can be matched.
As $b$ and the number of online vertices tend to infinity, it in turn captures the fractional relaxation of the Karp et al.~\cite{karp1990optimal} model.
This means that the algorithm is allowed to match an online node fractionally to multiple neighbours, as long as the total load on every vertex does not exceed 1.
For this problem, it is known that the deterministic algorithm {\sc Balance} (or {\sc Water-Filling}) achieves the optimal competitive ratio of $1 - 1/e$.

\subsection{Online hypergraph matching}
The online bipartite matching problem can be naturally generalized to hypergraphs as follows.
For $k\geq 2$, let $\Hyp= (V,W,H)$ be a $k$-uniform hypergraph with offline vertices $V$, online vertices $W$ and hyperedges $H$.
Every hyperedge $h\in H$ contains $k-1$ elements from $V$ and 1 element from $W$.
Just like before, the online vertices arrive sequentially with their incident hyperedges, and the goal is to select a large matching, i.e., a set of disjoint hyperedges.
The greedy algorithm is $1/k$-competitive.
On the other hand, no integral algorithm can be $2/k$-competitive \footnote{In \cite{journals/orl/TrobstU24}, it is shown that no algorithm can be $(2+
f(k))/k$-competitive for some positive function $f$ with $f(k)=o(1)$. In Appendix \ref{sec:integral_hardness}, we give a simple construction showing that no integral algorithm can be $2/k$-competitive.}.

For the \emph{fractional} version of the problem, Buchbinder and Naor \cite{buchbinder2009online} gave a deterministic algorithm which is $\Omega(1/\log k)$-competitive.
They also constructed an instance showing that any algorithm is $O(1/\log k)$-competitive.
In fact, their results apply to the more general setting of online packing linear program (LP), in which variables arrive sequentially.
In the context of hypergraphs, this means that the hyperedges arrive sequentially.
Note that for $k$-uniform hypergraphs, there is a trivial reduction from this \emph{edge-arrival} model to our \emph{vertex-arrival} model on $(k+1)$-uniform hypergraphs, by adding degree 1 online nodes.

The aforementioned results show that asymptotically, both integral and fractional versions of the online matching problem on $k$-uniform hypergraphs are essentially settled (up to constant factors).
However, our understanding of the problem for small values of $k$ (other than $k=2$) remains poor.
Many applications of online hypergraph matching in practice have small values of $k$.
For instance, in ride-sharing and on-demand delivery services~\cite{journals/ior/PavoneSST22}, $k-1$ represents the capacity of service vehicles, which is often small.
Another example is network revenue management problems~\cite{journals/ior/MaRST20}.
In this setting, given a collection of limited resources, a sequence of product requests arrive over time.
When a product request arrives, we have to decide whether to accept it irrevocably.
Accepting a product request generates profit, but also consumes a certain amount of each resource.
The goal is to devise a policy which maximizes profit.
In this context, $k-1$ represents the maximum number of resources used by a product.
As Ma et al.~\cite{journals/ior/MaRST20} noted, many of these problems have small values of $k$.
In airlines, for example, $k-1$ corresponds to the maximum number of flight legs included in an itinerary, which usually does not exceed two or three.

\subsection{Our contributions}
Motivated by the importance of online hypergraph matching for small values of $k$, we focus on $3$-uniform hypergraphs, with the goal of obtaining tighter bounds.
Our main result is a tight competitive ratio for the fractional version of this problem.

\begin{theorem} \label{thm:fractional}
For the online fractional matching problem on $3$-uniform hypergraphs, there is a deterministic $(e-1)/(e+1)$-competitive algorithm.
Furthermore, every algorithm is at most $(e-1)/(e+1)$-competitive.
\end{theorem}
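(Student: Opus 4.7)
The theorem has two parts: constructing a $(e-1)/(e+1)$-competitive fractional algorithm and proving a matching upper bound. For the algorithmic part, the plan is to apply the primal--dual framework to the natural LP relaxation, which maximizes $\sum_h x_h$ subject to the capacity constraints $\sum_{h \ni u} x_h \le 1$ for every vertex $u$. The dual has one nonnegative variable $\alpha_v$ per offline vertex and $\beta_w$ per online vertex, together with the constraint $\alpha_{v_1} + \alpha_{v_2} + \beta_w \ge 1$ for every hyperedge $h = \{v_1, v_2, w\}$. I would maintain the invariant $\alpha_v = \phi(y_v)$, where $y_v$ is the current fractional load of $v$ and $\phi(y) := (e^y - 1)/(e+1)$. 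When an online vertex $w$ arrives, the algorithm runs a water-filling procedure that continuously raises $x_h$ on the hyperedges incident to $w$ that currently minimize the joint potential $\phi(y_{v_1(h)}) + \phi(y_{v_2(h)})$, keeping ties balanced in parallel, and halts once either $w$ is fully matched or every candidate offline endpoint is saturated. Finally, $\beta_w$ is set to the minimal slack certifying dual feasibility against all remaining hyperedges of $w$.

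The key analytic identity is $\phi'(y) = \phi(y) + 1/(e+1)$, which implies that along any active hyperedge with common potential $\tau$, $\phi'(y_{v_1}) + \phi'(y_{v_2}) = \tau + 2/(e+1)$. Translating this into infinitesimal primal and dual increments as $\tau$ advances, I would show that the ratio of primal to dual change throughout the water-filling is at least $(e-1)/(e+1)$, and that the terminal contribution of $\beta_w$ fits into the same bound in both the fully-matched regime (where $\beta_w = (1-\tau^\ast)^+$) and the saturated regime (where the stopping level is exactly $2\phi(1) = 2(e-1)/(e+1)$, so the residual $\beta_w$ is bounded). Weak LP duality together with the fact that $\opt$ is dominated by the LP optimum then yield the competitive ratio.

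For the hardness direction, the plan is to construct an adversarial $3$-uniform instance parameterized by a large integer $n$. The construction I have in mind arranges the offline vertices so that each online arrival reveals hyperedges whose two offline endpoints play distinct structural roles: one endpoint sits inside a Karp--Vazirani--Vazirani staircase that forces the marginal fractional matching rate to decay exponentially, while the other endpoint belongs to a gadget mimicking the bipartite edge-arrival hardness, preventing the algorithm from committing too much mass to either side of a hyperedge without incurring a future loss. In the limit $n \to \infty$, the behaviour of any deterministic fractional algorithm on this instance can be encoded as a differential inequality whose extremal solution attains the bound $(e-1)/(e+1)$. Randomized algorithms are then ruled out via Yao's principle against a distribution over isomorphic relabellings of the instance.

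I expect the algorithmic direction to be mostly mechanical once the exponential potential $\phi(y) = (e^y-1)/(e+1)$ and its identity $\phi' = \phi + 1/(e+1)$ are identified; handling the case where several active hyperedges of $w$ share an offline endpoint will require some bookkeeping but follows the bipartite water-filling template. The principal obstacle will be the adversarial construction: fusing the two bipartite obstructions into a single hypergraph gadget so that their losses compound additively, rather than merely occurring in parallel, appears to be the delicate conceptual step, and verifying asymptotic tightness in the continuous limit is where I expect most of the technical effort to lie.
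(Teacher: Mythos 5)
Your algorithmic half is close in spirit to the paper's water-filling primal--dual algorithm (your potential $\phi(y)=(e^y-1)/(e+1)$ is just the paper's $f(y)=e^y/(e+1)$ shifted by $f(0)$, and the dual $\alpha_v=\phi(y_v)$ matches the accumulated offline dual), but your stopping rule is wrong, and the error is not cosmetic. You halt when $w$ is exhausted or ``every candidate offline endpoint is saturated,'' i.e.\ at potential level $2\phi(1)$; the correct rule is to stop raising a hyperedge $\{u,v,w\}$ as soon as $f(x(\delta(u)))+f(x(\delta(v)))\ge 1$, which in your normalization is $\phi(y_u)+\phi(y_v)\ge \phi(1)=(e-1)/(e+1)$, reached already at symmetric loads $\ln((e+1)/2)\approx 0.62$ --- far below saturation. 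This early stopping is exactly where the constant comes from: it keeps the online dual rate $1-f(\ell_u)-f(\ell_v)$ nonnegative, so primal and dual grow at equal rates, and it guarantees that every incident hyperedge is covered to level $1-2f(0)=(e-1)/(e+1)$ by the offline duals alone. With your saturation rule the accounting you sketch breaks: past the threshold the offline duals accrue at rate $f(\ell_u)+f(\ell_v)>1$ per unit of primal, so either $\beta_w$ must go negative (infeasible) or the primal no longer dominates the dual. Moreover, choosing $\beta_w$ as ``the minimal slack certifying dual feasibility'' (to level $1$) fails outright: an online node all of whose hyperedges already exceed the threshold forces $\beta_w>0$ while the primal gains nothing, so ``primal-to-dual change ratio $\ge (e-1)/(e+1)$ throughout'' is false. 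The paper's fix is to require coverage only to level $\rho=(e-1)/(e+1)$ (equivalently, scale the dual by $1/\rho$ at the end) and set the online dual to zero in that case, which is only consistent with the threshold stopping rule.

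The hardness half of your proposal states the right high-level idea (fuse the Karp--Vazirani--Vazirani vertex-arrival instance with the Gamlath et al.\ edge-arrival instance), but it stops at the level of intent: no construction is given, and ``encode the algorithm's behaviour as a differential inequality whose extremal solution attains $(e-1)/(e+1)$'' is not an argument. Concretely, what is missing is (i) the adaptive phase structure in which both offline endpoints of a hyperedge live in the same evolving component (your plan of placing one endpoint in a KVV gadget and the other in an edge-arrival gadget does not match how the two constructions must interact: the KVV uncertainty is applied \emph{within} each phase to submatchings of the current component matchings, bucketed by pre-existing load); (ii) the two reductions that make the analysis possible at all --- to symmetric algorithms and to $\varepsilon$-threshold-respecting algorithms, the latter obtained by blowing each offline vertex into $N$ copies; and (iii) the quantitative core of the last phase, namely a lower bound on the loads of the active (vertex-cover) nodes via comparison with a binomial-CDF process, giving residual capacity $O(\sqrt{T})+\varepsilon\,O(T^2)$ per component. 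Also note that Yao's principle is not needed and is not the right tool here: the statement concerns fractional algorithms, for which one may take the algorithm deterministic, and the instance must be \emph{adaptive} to the algorithm's fractional choices rather than a fixed distribution over relabellings. As written, the proposal does not constitute a proof of either direction without these repairs.
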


The deterministic algorithm in \Cref{thm:fractional} belongs to the class of {\sc Water-Filling} algorithms.
It uses the function $f(x) \coloneqq e^x/(e+1)$ to decide which hyperedges receive load.
In particular, for every online vertex $w$, the incident hyperedges $h=\{u,v,w\}\in \delta(w)$ which minimize $\phi(h)\coloneqq f(x(\delta(u))) + f(x(\delta(v)))$ receive load until $\phi(h)\geq 1$.

Our main contribution is proving a matching upper bound in \Cref{thm:fractional}.
For this, it suffices to consider deterministic algorithms because every randomized algorithm induces a deterministic fractional algorithm with the same expected value. 
This, in turn, allows us to construct an instance which is adaptive to the actions of the algorithm. 
The key idea is to combine two hard instances for online matching on \emph{bipartite graphs} \cite{karp1990optimal,gamlath2019online}.

We start with the instance in \cite{gamlath2019online}, designed for the edge-arrival model. 
In this instance, edge arrivals are grouped into phases, such that the size of an online maximum matching increases by one per phase.
At the end of every phase, as long as the total fractional value on the revealed edges exceeds a certain threshold, the next phase begins. Otherwise, the instance terminates. 
For our purpose, we want a more fine-grained control over the actions of the algorithm.
So, we apply thresholding at the node level instead, based on fractional degrees, to determine which nodes become incident to the edges arriving in the next phase.

In our construction, we will have multiple copies of this modified edge-arrival instance. 
The edges in these instances are connected to the online nodes to form hyperedges.
The way in which they are connected is inspired by the instance in \cite{karp1990optimal}, originally designed for the vertex-arrival model. 
The idea behind this vertex-arrival instance is to obfuscate the partners of the online nodes in an offline maximum matching, which is also applicable in our setting.

Our next result concerns the online integral matching problem on $k$-uniform hypergraphs.
We show that one can do better than the greedy algorithm if the online nodes have bounded degree.
It is achieved by the simple algorithm {\sc Random}: for every online vertex $w$, uniformly select a hyperedge among all the hyperedges incident to $w$ which are disjoint from the current matching.
\begin{theorem} \label{thm:integral_bounded_degree}
For the online matching problem on $k$-uniform hypergraphs where online vertices have maximum degree $d$, the competitive ratio of {\sc Random} is at least
\[\min\left(\frac{1}{k-1},\, \frac{d}{(d-1)k+1}\right).\] 
\end{theorem}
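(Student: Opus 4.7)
The plan is to carry out a randomized primal–dual / charging analysis of \textsc{Random}. Write $p_v=\Pr[v \text{ is matched by \textsc{Random}}]$ for each vertex $v$ (using $q_w=p_w$ for online $w$), and $\pi_h=\Pr[\textsc{Random} \text{ picks } h]$ for each hyperedge $h$. The identities $\sum_{h\ni v}\pi_h = p_v$ and $\sum_h \pi_h = \Exp[|\textsc{Random}|]$ will be used throughout.

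First I would establish two structural inequalities holding for every hyperedge $h=\{u_1,\ldots,u_{k-1},w\}$ in the instance. The basic one, $q_w + \sum_i p_{u_i} \geq 1$, comes from the observation that when $w$ arrives either some hyperedge at $w$ is available (forcing $w$ to be matched) or $h$ itself is blocked (forcing some $u_i$ to be matched beforehand). The randomized one, $\pi_h \geq \bigl(1-\sum_i p_{u_i}\bigr)/d$, comes from the fact that $\pi_h = \Exp[\mathbf{1}[h\text{ available}]/a(w)]$ and $a(w)\le d$; this is where the degree bound enters. Together with $p_{u_i}\geq \pi_h$, the latter yields the per-edge lower bound $\sum_i p_{u_i}\geq (k-1)/(d+k-1)$.

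Next I would translate these inequalities into a charging scheme for \textsc{Opt}: for each $e^*\in\textsc{Opt}$, in the realization where $w$ is matched by \textsc{Random} send one unit of charge to the \textsc{Random}-edge at $w$; otherwise send it to the \textsc{Random}-edge at some $u_i$ matched before $w$'s arrival (guaranteed to exist). Each \textsc{Random}-edge $h=\{v_1,\ldots,v_{k-1},w\}$ is then charged by at most one OPT edge through $w$ and at most one through each $v_i$. A crude deterministic analysis bounds the load at $k$ charges per \textsc{Random}-edge, recovering only the greedy ratio $1/k$. The refinement is to show that, in expectation and using the $1/d$ lower bound on $\pi_h$ whenever $h$ is available, each \textsc{Random}-edge carries average load at most $\max\bigl(k-1,((d-1)k+1)/d\bigr)$. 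This would give
\[
|\textsc{Opt}| \;\leq\; \max\!\Bigl(k-1,\,\tfrac{(d-1)k+1}{d}\Bigr)\,\Exp[|\textsc{Random}|],
\]
which is exactly the claimed competitive ratio.

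The hard part will be the refined load bound. Concretely, the difficulty is that when $h\notin\textsc{Opt}$ but its online vertex $w$ lies in an OPT edge $e^*_w\neq h$, the charge from $e^*_w$ arrives alongside up to $k-1$ ``Case~B'' charges through the offline vertices $v_i$; so one must show that the joint probability of $h$ attracting all of these is damped by a factor of $(d-1)/d$ per offline slot. I would quantify this by conditioning on the arrival of each $v_i$'s OPT partner $w'_i$: for $w'_i$ to be unmatched, all of its (at most $d$) hyperedges must be blocked, and the symmetric $1/a(w'_i)\ge 1/d$ randomization implies that such an event is less likely the larger $d$ is. Propagating this through the charging—combined with the basic inequality that recovers the $1/(k-1)$ regime when $d\leq k-1$—should produce the two bounds in the minimum. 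Special care is needed to avoid double-counting when $h\in\textsc{Opt}$, where $h$ is the only OPT edge touching any of its vertices and hence receives exactly one charge.
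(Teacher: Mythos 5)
Your preliminary inequalities are all correct ($q_w+\sum_i p_{u_i}\ge 1$, $\pi_h\ge \Pr[h\text{ available}]/d$, $p_{u_i}\ge\pi_h$), but the crux of your plan --- the ``refined load bound'' --- is precisely what is missing, and in the per-edge form in which you state it (each \textsc{Random}-edge carries expected load at most $\max(k-1,((d-1)k+1)/d)$, which is how one would sum it against $\Exp[\,|\Match|\,]$) it is simply false. Take $k=3$, $d=2$: offline nodes $v_1,v_2,u_1,u_2,a_1,a_2$; the first online node $w$ has the two hyperedges $h=\{v_1,v_2,w\}$ and $e^*=\{u_1,u_2,w\}$; afterwards $w_1'$ arrives with the single hyperedge $e_1^*=\{v_1,a_1,w_1'\}$ and $w_2'$ with the single hyperedge $e_2^*=\{v_2,a_2,w_2'\}$, and $\opt=\{e^*,e_1^*,e_2^*\}$. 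Conditioned on \textsc{Random} picking $h$ (probability $1/2$), $w$ is matched, so $e^*$ charges $h$; and each $w_i'$ is unmatched because its only hyperedge is blocked by $v_i$, so $e_i^*$ also charges $h$. Thus $h$ receives all $k=3$ charges with conditional probability $1$, exceeding your claimed bound of $2$. The damping mechanism you invoke (a factor $(d-1)/d$ per offline slot coming from the randomization at the OPT partners $w_i'$) does not exist when those partners have degree $1$ --- which is allowed, since $d$ is only an upper bound on online degrees. If instead you intend the load bound only in aggregate, it is equivalent to the theorem itself, and your sketch offers no mechanism to prove it once the per-edge version fails; the instance above shows the aggregate only balances out because of what happens on \emph{other} realizations, so some cross-realization or cross-edge accounting is unavoidable.

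The paper's proof supplies exactly that accounting by replacing unit charges with a weighted primal--dual assignment: when \textsc{Random} matches a hyperedge, every offline node in it gets dual value $\rho=\min\left(\frac{1}{k-1},\frac{d}{(d-1)k+1}\right)$ and the online node gets $\max\left(0,\frac{d-k+1}{(d-1)k+1}\right)$, and these sum to the primal gain of $1$. Then for \emph{every} hyperedge $h$ (not only OPT edges, so the bound even holds against $\optLP$) one conditions on whether some offline node of $h$ is already matched when its online node arrives: if so, that single node already carries dual $\rho$; if not, the online node is matched with certainty (collecting its dual) and $h$ itself is chosen with probability at least $1/d$, and the two contributions add up to at least $\rho$ in expectation. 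Your two ingredients (the $1/d$ selection bound and the blocked-or-matched dichotomy) are the same ones the paper uses, but the nonzero dual placed on the online node when $d\ge k$ is what substitutes for --- and cannot be recovered from --- a per-edge load bound.
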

Note that in \Cref{thm:integral_bounded_degree}, the first term is at most the second term if and only if $d\leq k-1$. Moreover, {\sc Random} is at least as good as the greedy algorithm, since the latter is $1/k$-competitive. For $3$-uniform hypergraphs, the bound becomes $1/2$ for $d \leq 2$ and $1/(3-2/d)$ otherwise, thus interpolating between $1/3$ and $1/2$. Note that for $d \leq 2$, the bound is optimal, since the online matching problem on graphs under edge arrivals is a special case of this setting (with $k = 3, d=1$), for which an upper bound of $1/2$ is known even against fractional algorithms on bipartite graphs \cite{gamlath2019online}.

Since every randomized algorithm for integral matching induces a deterministic algorithm for fractional matching, the upper bound of $(e-1)/(e+1)\approx 0.4621$ in \Cref{thm:fractional} also applies to the integral problem on 3-uniform hypergraphs.
However, the best known lower bound is 1/3, given by the greedy algorithm. An interesting question for future research is whether there exists an integral algorithm better than greedy on 3-uniform hypergraphs.

\subsection{Related work}
Since the online matching problem was introduced in \cite{karp1990optimal}, it has garnered significant interest, leading to extensive follow-up work.
We refer the reader to the excellent survey by Mehta~\cite{mehta_survey} for navigating this rich literature.
The original analysis of {\sc Ranking} \cite{karp1990optimal} was simplified in a series of papers \cite{birnbaum2008line,devanur2013randomized,goel2008online,eden2021economics}.
Many variants of the problem have been studied, such as the online $b$-matching problem \cite{kalyanasundaram2000optimal}, and its extension to the AdWords problem \cite{buchbinder2007online,devanur2012online,huang2020adwords,mehta2007adwords}.
Weighted generalizations have been considered, e.g., vertex weights \cite{aggarwal2011online,huang2019online} and edge weights \cite{fahrbach2022edge}.
Weakening the adversary by requiring that online nodes arrive in a random order has also been of interest \cite{karande2011online,mahdian2011online,kesselheim2013optimal}.
Another line of research explored more general arrival models such as two-sided vertex arrival \cite{conf/icalp/WangW15}, general vertex arrival \cite{gamlath2019online}, edge arrival \cite{journals/algorithmica/BuchbinderST19,gamlath2019online}, and general vertex arrival with departure times \cite{huang_fully_online_1,huang_fully_online_2,ashlagi_online_windowed}.

In contrast, the literature on the online hypergraph matching problem is relatively sparse. Most work has focused on stochastic models, such as the random-order model.
Korula and Pal \cite{conf/icalp/KorulaP09} first studied the edge-weighted version under this model.
For $k$-uniform hypergraphs, they gave an $\Omega(1/k^2)$-competitive algorithm.
This was subsequently improved to $\Omega(1/k)$ by Kesselheim et al.~\cite{kesselheim2013optimal}.
Ma et al.~\cite{journals/ior/MaRST20} gave a $1/k$-competitive algorithm under `nonstationary' arrivals.
Pavone et al.~\cite{journals/ior/PavoneSST22} studied online hypergraph matching with delays under the adversarial model.
At each time step, a vertex arrives, and it will depart after $d$ time steps.
A hyperedge is revealed once all of its vertices have arrived.
Note that their model is incomparable to ours because every vertex has the same delay $d$.

In the prophet IID setting, every online node has a weight function which assigns weights to its incident hyperedges, and these functions are independently sampled from the same distribution.
For this problem, \cite{marinkovic_online_2023} gave a $O(\log (k)/k)$ upper bound on the competitive ratio. 
We refer to \cite{marinkovic_online_2023} for an overview of known results in related settings. 

Hypergraph matching on $k$-uniform hypergraphs is a well-studied problem in  the offline setting. It is known to be impossible to approximate within a factor of $\Omega(1/k)$ in polynomial time, unless $\mathbf{NP}\subseteq \mathbf{BPP}$ \cite{lee_asymptotically_2025}.
Moreover, the factor between the optimal solution and the optimal value of the natural LP relaxation is at least $1/(k-1+1/k)$ \cite{chan_linear_2012}.

A special case that has also been studied is the restriction to $k$-partite graphs, where the vertices are partitioned into $k$ sets and every hyperedge contains exactly one vertex from each set. This setting is called \emph{$k$-dimensional matching}, and the optimal solution is known to be at least $1/(k-1)$ times the optimal value of the standard LP relaxation \cite{chan_linear_2012}. 
For $k=3$, the best known polynomial time approximation algorithm gives a $(3/4-\varepsilon)$-approximation \cite{conf/focs/Cygan13}. 

\subsection{Paper organization}
In \cref{sec_preliminaries}, we give the necessary preliminaries and discuss notation.
\cref{optimal_algo_3_uniform_hypergraphs} presents the optimal primal-dual fractional algorithm for $3$-uniform hypergraphs, which shows the first part of \cref{thm:fractional}.
Section \ref{sec:upper_bound} complements this with a tight upper bound, proving the second part of \cref{thm:fractional}.
The proof of \cref{thm:integral_bounded_degree} is shown in Section \ref{app:bounded_degree_algo}.

\section{Preliminaries}
\label{sec_preliminaries}
Given a hypergraph $\mathcal{H} = (V,H)$ with vertex set $V$ and hyperedge set $H$, the maximum matching problem involves finding a maximum cardinality subset of disjoint hyperedges. The canonical primal and dual LP relaxations for this problem are respectively given by:

\vspace{0.3cm}

\begin{minipage}[t]{0.5\textwidth}
\begin{align*}
    \max \sum_{h \in H} x_h  & \\
    \sum_{h \in \delta(v)} x_h &\leq 1  \quad \forall v \in V\\
    x_h &\geq 0 \quad \forall h\in H
\end{align*}
\end{minipage}
\begin{minipage}[t]{0.5\textwidth}
\begin{align*}
    \min \sum_{v \in V} y_v  & \\
    \sum_{v \in h} y_v &\geq 1 \quad \forall h \in H\\
    y_v &\geq 0 \quad \forall v\in V.
\end{align*}
\end{minipage}

\vspace{0.3cm}

\noindent We denote by $\optLP(\mathcal{H})$ the offline optimal value of these two LPs.
We denote by $\opt(\mathcal{H})$ the objective value of an offline optimal integral solution to the primal LP, which clearly satisfies $\opt(\mathcal{H}) \leq \optLP(\mathcal{H})$.

The online matching problem on $k$-uniform hypergraphs under vertex arrivals is defined as follows.
An instance consists of a $k$-uniform hypergraph $\mathcal{H} = (V, W, H)$, where $V$ is the set of offline nodes and $W=(w_1,w_2,\ldots)$ is the sequence of online nodes. 
The ordering of $W$ corresponds to the arrival order of the online nodes. 
Every hyperedge $h \in H$ has exactly one node in $W$ and $k-1$ nodes in $V$. 
When an online node $w \in W$ arrives, its incident hyperedges $\delta(w)$ are revealed.
A fractional algorithm is allowed to irrevocably increase $x_h$ for every $h \in \delta(w)$, whereas an integral algorithm is allowed to irrevocably pick one of these hyperedges, i.e., setting $x_h = 1$ for some $h\in \delta(w)$.

Given an algorithm $\mathcal{A}$ and an instance $\mathcal{H}$, we denote by 
$\val(\mathcal{A},\mathcal{H}) := \sum_{h \in H} x_h$
the value of the (fractional) matching obtained by $\Alg$ on $\mathcal{H}$. 
An integral algorithm is \emph{$\rho$-competitive} if for any instance $\mathcal{H}$, $\val(\mathcal{A},\mathcal{H}) \geq \rho \; \opt(\mathcal{H})$.
Similarly, a fractional algorithm is \emph{$\rho$-competitive} if for any instance $\mathcal{H}$, $\val(\mathcal{A},\mathcal{H}) \geq \rho \; \optLP(\mathcal{H})$.

In this paper, we focus on 3-uniform hypergraphs.
For a $3$-uniform instance $\mathcal{H} = (V,W,H)$, we denote by $\Gamma(\mathcal{H}) = (V, E)$ the graph on the offline nodes with edge set
\begin{align}
\label{def_E}
E := \Big\{(u,v) \in V \times V, \quad \exists w \in W \text{ s.t. }\{u,v,w\} \in H\Big\}.
\end{align}
We remark that $\Gamma(\mathcal{H})$ is not a multigraph. In particular, an edge $(u,v) \in E$ can have several hyperedges in $H$ containing it. A fractional matching $x$ on the hyperedges $H$ naturally induces a fractional matching $x'$ on the edges $E$, i.e., $x'_e = \sum_{h:e \subseteq h} x_h$ for every $e \in E.$ 
The value obtained by an algorithm $\Alg$ can thus also be counted as $\val(\mathcal{A},\mathcal{H}) = \sum_{h \in H} x_h = \sum_{e \in E} x'_e$. 
For an offline node $u\in V$, we denote its \emph{load} (or \emph{fractional degree}) as $\ell_u = x(\delta(u)) \in [0,1]$.

\section{Optimal fractional algorithm for 3-uniform hypergraphs}
\label{optimal_algo_3_uniform_hypergraphs}

In this section, we present a primal-dual algorithm for the online fractional matching problem on 3-uniform hypergraphs under vertex arrivals.
This algorithm will turn out to be optimal with a tight competitive ratio of $(e-1)/(e+1) \approx 0.4621$. We define the following function $f :[0,1] \to [0,1]$:
\begin{equation}
\label{eq_threshold_funct}
\hfunc(x) := \frac{e^x}{e+1}.
\end{equation}
When an online node $w$ arrives, our algorithm chooses to uniformly increase the primal variables of the hyperedges $\{u,v,w\}$ for which $\hfunc(x(\delta(u)))+\hfunc(x(\delta(v)))$ is minimal. We note that this belongs to the class of \emph{water-filling} algorithms \cite{kalyanasundaram2000optimal}. For this reason, we define the \emph{priority} of a hyperedge $h = \{u,v,w\}$ as:
\begin{equation}
\label{eq_priority_funct}
\phi(h) := \hfunc(x(\delta(u))) + \hfunc(x(\delta(v))).
\end{equation}
Figure \ref{fig_threshold} shows the possible values of $x(\delta(u))$ and $x(\delta(v))$ such that $\phi(h)\leq 1$. We now present the algorithm. The dual updates are shaded in gray since they are only needed for the analysis.

\begin{algorithm}
    \caption{Water-filling fractional algorithm for 3-uniform hypergraphs}
    \label{vertex_arrival_water_algo}
    \begin{algorithmic}
    \State $\mathbf{Input: }$ $3$-uniform hypergraph $\mathcal{H} = (V,W,H)$ with online nodes $W$.
    \State $\mathbf{Output: }$ Fractional matching $x \in [0,1]^H$
    \vspace{0.2cm}
    \State \textbf{when $w \in W$ arrives with $\delta(w) \subseteq H$:}
    \State \quad set $x_h = 0$ for every $h \in \delta(w)$
    \State \quad increase $x_h$ for every
    $h = \{u,v,w\} \in \arg \min_{h \in \delta(w)} \{\phi(h)\}$ at rate $1$
     \State \colorline{\quad increase $y_u$ and $y_v$ at rates $\hfunc(x(\delta(u)))$ and $\hfunc(x(\delta(v)))$}
     \State \colorline{\quad increase $y_w$ at rate $1 - \hfunc(x(\delta(u))) - \hfunc(x(\delta(v)))$ }
     \State \quad \quad \textbf{until} $x(\delta(w)) = 1$ \textbf{or} $\phi(h) \geq 1$ for every $h \in \delta(w)$.
\State \Return $x$
    \end{algorithmic}
    \end{algorithm}
\FloatBarrier

\begin{theorem}
    \cref{vertex_arrival_water_algo} is $(e-1)/(e+1)$-competitive for the online fractional matching problem on $3$-uniform hypergraphs.\label{thm:optimal_algo}
\end{theorem}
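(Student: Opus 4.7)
The plan is a standard primal-dual analysis. I would first verify a conservation identity: during the processing of an online node, each hyperedge $h=\{u,v,w\}$ in the current argmin contributes $1$ to the primal rate and $f(x(\delta(u))) + f(x(\delta(v))) + (1-\phi(h)) = 1$ to the dual rate. Since both quantities start at $0$, this gives $\sum_h x_h = \sum_v y_v$ at every point in the execution. I would also check primal feasibility: once $x(\delta(u))=1$ for some offline $u$, any incident hyperedge $h=\{u,v,w\}$ has $\phi(h) \ge f(1)+f(0) = 1$, so the stopping condition prevents further increase. Online feasibility is immediate from the stopping rule.

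The main step is showing approximate dual feasibility, namely $y_u+y_v+y_w \geq \alpha := (e-1)/(e+1)$ for every hyperedge $h = \{u,v,w\}$. For any offline $u$, an easy integration shows $y_u = \int_0^{x(\delta(u))} \hfunc(s)\,ds = \hfunc(x(\delta(u))) - 1/(e+1)$; this is the key reason for choosing $\hfunc(x)=e^x/(e+1)$. Fix a hyperedge $h=\{u,v,w\}$ and let $X_u^*,X_v^*$ denote the values of $x(\delta(u)),x(\delta(v))$ at the moment the processing of $w$ ends, and write $\phi^* := \hfunc(X_u^*)+\hfunc(X_v^*)$. Since $X_u, X_v$ are monotone, it suffices to prove the inequality at that moment. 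I would split into two cases:

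\textbf{Case A ($\phi^*\ge 1$).} Directly,
\[y_u+y_v \;\ge\; \hfunc(X_u^*)+\hfunc(X_v^*) - \tfrac{2}{e+1} \;\ge\; 1 - \tfrac{2}{e+1} \;=\; \alpha,\]
and $y_w\ge 0$ gives the claim.

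\textbf{Case B ($\phi^*<1$).} The stopping rule forces $x(\delta(w))=1$. During the processing of $w$, $y_w$ accumulates at rate $1-\phi_{\min}(s)$ per unit of $x(\delta(w))=s$, where $\phi_{\min}(s)$ is the argmin priority at that moment. Because $h$'s own priority is always at least $\phi_{\min}$ and is monotone in time, $\phi_{\min}(s) \le \phi^*$ throughout, so $y_w \ge \int_0^1 (1-\phi^*)\,ds = 1-\phi^*$. Combining with $y_u+y_v = \phi^* - 2/(e+1)$ yields the same bound $\alpha$.

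Given the inequality $y_u+y_v+y_w \ge \alpha$, the scaled vector $y/\alpha$ is dual-feasible, and weak LP duality gives $\optLP(\mathcal{H}) \le \sum_v y_v/\alpha = \val(\mathcal{A},\mathcal{H})/\alpha$, i.e., the desired competitive ratio $(e-1)/(e+1)$. I expect the only subtle point to be the lower bound $y_w \ge 1-\phi^*(h)$ in Case B: one has to argue cleanly that $h$'s priority upper bounds the running minimum throughout $w$'s processing, which follows from monotonicity of $\hfunc\circ x(\delta(\cdot))$ and the definition of argmin. The rest is bookkeeping.
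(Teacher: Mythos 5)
Your proposal is correct and follows essentially the same primal--dual argument as the paper: the same conservation of primal and dual rates, the same integral identity $y_u=\hfunc(x(\delta(u)))-\hfunc(0)$, and the same two-case analysis (your per-hyperedge split on $\phi^*\ge 1$ versus $\phi^*<1$ is just a reorganization of the paper's split on which stopping condition fired, with the identical bound $y_w\ge 1-\phi^*$ in the second case). No gaps; the subtle point you flag is handled exactly as in the paper's proof.
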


\begin{proof}[Proof of \cref{thm:optimal_algo}]
    We first show that the algorithm produces a feasible primal solution. Note that the fractional value of a hyperedge $h$ is only being increased if $\phi(h)\leq 1$. If $x(\delta(v))=1$ for some offline node $v$ then for any hyperedge $h \ni v$, we have:
        \begin{align*}
           \phi(h) = \hfunc(x(\delta(u))) +\hfunc(x(\delta(v)))  \geq \hfunc(1) + \hfunc(0) = \frac{e+1}{e+1}  =1,
        \end{align*}
        where $u$ denotes the second offline node belonging to $h$. The value of the hyperedge $h$ will thus not be increased anymore, proving the feasibility of the primal solution.
        
    In order to prove the desired competitive ratio, we show that the primal-dual solutions constructed during the execution of the algorithm satisfy:
    \begin{align}
    \label{eq_primal_equal_dual_v}
    \val(\mathcal{A}) = \sum_{h \in H} x_h &= \sum_{v \in V \cup W} y_v\qquad \text{and}\\
    \label{eq_bound_comp_ratio_v}
    \sum_{v \in h} y_v &\geq \rho:= \frac{e-1}{e+1} \qquad \forall h \in H.
    \end{align}
    This is enough to imply the desired competitiveness of our algorithm, since $y/\rho \in \mathbb{R}^V_+$ is then a feasible dual solution, giving:
    \[\val(\mathcal{A}) \geq \sum_{v \in V \cup W} y_v \geq \rho \; \optLP .\]
    
        Note that \eqref{eq_primal_equal_dual_v} holds at the start of the algorithm. Let us fix a hyperedge $h = \{u,v,w\} \in H$.
        When $x_h$ is continuously being increased at rate one, the duals on the incident nodes $y_u$, $y_v$ and $y_{w}$
        are being increased at rate $\hfunc(x(\delta(u)))$, $\hfunc(x(\delta(v)))$ and $1-\hfunc(x(\delta(u))) -\hfunc(x(\delta(v)))$ respectively.
        Observe that these rates sum up to one.
        Hence, $\val(\mathcal{A}) = \sum_{h \in H}x_h$ and $\sum_{v\in V \cup W}y_v$ are increased at the same rate, meaning that \eqref{eq_primal_equal_dual_v} holds at all times during the execution of the algorithm.

        We now show that \eqref{eq_bound_comp_ratio_v} holds at the end of the execution of the algorithm. Let us fix an online node $w \in W$. For a given hyperedge $h \in \delta(w)$, note that the algorithm only stops increasing $x_h$, as soon as either $\phi(h) \geq 1$ or $x(\delta(w)) = 1$ is reached. We distinguish these two cases for the analysis.

        Let us first focus on the first case, meaning that $\phi(h) \geq 1$ has been reached for every $h \in \delta(w)$. Consider an arbitrary $h = \{u,v,w\} \in \delta(w)$.
        For every unit of increase in $x(\delta(u))$, $y_u$ will have been increased by $\hfunc(x(\delta(u)))$.
        If we denote by $\ell_u := x(\delta(u))$ and $\ell_v := x(\delta(v))$ the fractional loads on $u$ and $v$ after the last increase on the hyperedges adjacent to $w$, then:
    
        \begin{align}
            y_u=\int_0^{\ell_u}\hfunc(s)ds = \hfunc(\ell_u) - \hfunc(0) \quad \text{and}  \quad y_v=\int_0^{\ell_v}\hfunc(s)ds = \hfunc(\ell_v) - \hfunc(0), \label{eq:dual_val}
        \end{align}
        where we have used the fact that $f$ is an antiderivative of itself. Therefore,
        \begin{align*}
            y_u+y_v+y_w&\geq y_u+y_v= \hfunc(\ell_u)-\hfunc(0)+\hfunc(\ell_v)-\hfunc(0)\\&=\phi(h) -2\hfunc(0)\geq  1 -2\hfunc(0)=\frac{e-1}{e+1}.
        \end{align*}
    
        Suppose now that $x(\delta(w)) = 1$ has been reached. In particular, this means that for each $\{u,v,w\}\in \delta(w)$, the
        rate at which $y_w$ was increased must have been at least $1-\hfunc(\ell_u)-\hfunc(\ell_v)$ at all times, where $\ell_u$ and $\ell_v$ denote the fractional loads on $u$ and $v$ after that the algorithm has finished increasing the edges incident to the online node $w$.
        Hence, we have:
        \begin{align*}
            y_w \geq 1\cdot (1-\hfunc(\ell_u)-\hfunc(\ell_v)).
        \end{align*}
        By using \eqref{eq:dual_val} we see that:
        \begin{align*}
            y_u+y_v+y_w\geq \hfunc(\ell_u) + \hfunc(\ell_v) - 2\hfunc(0) + (1-\hfunc(\ell_u)-\hfunc(\ell_v))= 1 -2\hfunc(0)=\frac{e-1}{e+1}.
        \end{align*}
        This proves \eqref{eq_bound_comp_ratio_v}, and thus completes the proof of the theorem.
        \end{proof}

We note that the above algorithm and its analysis are very similar to those of \cite{buchbinder2009online}. Here we optimized the choice of $f$ for the case of 3-uniform hypergraphs.
By choosing $f(x):= (k\log(k))^{x-1}$ and setting $\phi(h):= \sum_{u\in h} f(x(\delta(u)))$ for every $k$-uniform hyperedge $h$, it can be shown that the natural generalization of \cref{vertex_arrival_water_algo} to $k$-uniform hypergraphs is $\Omega(1/\log(k))$-competitive \cite{journals/orl/TrobstU24}.

\section{Tight upper bound for 3-uniform hypergraphs}
\label{sec:upper_bound}
We now prove the second part of \cref{thm:fractional}, i.e., every algorithm is at most $(e-1)/(e+1)$-competitive for the online fractional matching problem on 3-uniform hypergraphs under vertex arrivals.

\subsection{Overview of the construction}
\label{subsec:construction_overview}
We construct an adversarial instance that is adaptive to the behaviour of the algorithm. The main idea is to combine the vertex-arrival instance of Karp et al.~\cite{karp1990optimal} and the edge-arrival instance of Gamlath et al.~\cite{gamlath2019online} for bipartite graphs. We first briefly describe these two instances, and then give a high-level view of our construction.

\paragraph{Bipartite hardness under vertex-arrivals.}
An illustration of this construction is given in Figure \ref{fig_v_arrival}. This instance consists of a bipartite graph, where one side of the bipartition has size $n \in \mathbb{N}$ and is known upfront. The other side of the bipartition arrives online, with $n$ online nodes arriving in total. The $k$th online node has degree $n-k+1$, with the edges incident to it connecting to the $n-k+1$ offline nodes with the highest fractional degree at that point in time. 

Note that, after all online nodes have arrived, the instance contains a perfect matching of size $n$. It can however be shown that no fractional algorithm can achieve a competitive ratio better than $1- 1/e \approx 0.63$ for this instance, due to the uncertainty about the edge adjacent to each online node in the perfect matching, since it is a decision made online by the adversary. This construction can easily be adapted to our model on $3$-uniform hypergraphs under vertex arrivals, by replacing each offline node by a pair of two identical offline nodes, and replacing each edge of the graph by a hyperedge of size three.

\begin{figure}
\center
\includegraphics[width = 0.95 \textwidth]{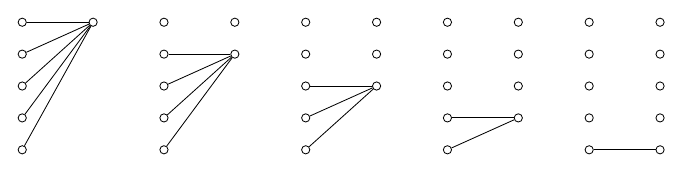}
\caption{The hard instance for bipartite graphs under vertex-arrivals. The $k$th online node connects to the $n - k +1$ offline nodes with the highest fractional degree at that point in time. In this figure, the offline vertices on the left side of the bipartition are drawn in decreasing order of their fractional degrees. The perfect matching consists of all the horizontal edges.}
\label{fig_v_arrival}
\end{figure}

\begin{figure}
\center
\includegraphics[width = 0.95 \textwidth]{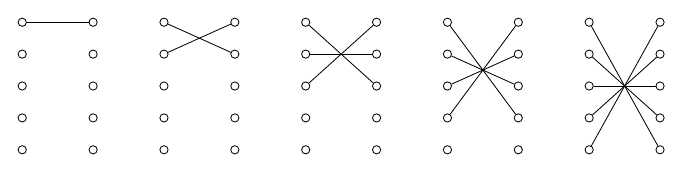}
\caption{The hard instance for bipartite graphs under edge-arrivals. The adversary can terminate the instance at any phase if the algorithm does not generate enough value. The optimal matching consists of all the edges which arrived in the last phase.}
\label{fig_edge_arr_without_threshold}
\end{figure}

\paragraph{Bipartite hardness under edge-arrivals.}
An illustration of this construction is given in Figure \ref{fig_edge_arr_without_threshold}. Each side of the bipartition has cardinality $T$ and the nodes on both sides are denoted by $\{1, \dots, T\}$. This instance now consists of $T' \leq T$ phases, where $T'$ is unknown a priori and determined online by the adversary given the choices of the algorithm. At each phase $t \in [T']$, the following $t$ new edges, which form a matching, are revealed:
\[ \Big\{(i, t -i + 1) : i \in \{1, \dots, t\} \Big\}.\]
Observe that these become the optimal matching at that point in time. If the value generated by the algorithm exceeds $\alpha t$ at the end of phase $t$ for some $\alpha \in [0,1]$, the instance continues to the next phase. Otherwise, the instance terminates. It is shown in \cite{gamlath2019online} that no fractional algorithm can achieve a competitive ratio better than $1/2$ for this instance. The hardness is due to the uncertainty about the time horizon: the adversary can always make a new set of edges arrive if the algorithm gains enough value in each phase. This construction can easily be adapted to our vertex-arrival model on $3$-uniform hypergraphs, by appending an online node with degree $1$ to each edge of the bipartite graph.

\paragraph{Our construction.}
We now give a high-level overview of our construction, which combines both hard instances described above. The offline vertices of the hypergraph are partitioned into $m$ sets $C_1, \dots, C_m$, which we call \emph{components}. Each component will induce a bipartite graph with bipartition $C_i = U_i \cup V_i$, where $U_i$ and $V_i$ are both ordered vertex sets of size $T \in \mathbb{N}$.

The instance consists of $T$ phases. In each phase $t \in \{1, \dots, T\}$, the adversary first selects a bipartite matching $\Match_i^{(t)}$ on each component $C_i$. Taking the union of these matchings gives a larger matching on the offline nodes:
$\Match^{(t)} := \bigcup_{i = 1}^{m}\Match_i ^ {(t)}.$

After selecting the matching $\Match^{(t)}$ at phase $t$, the adversary selects the online nodes, with their incident hyperedges, arriving in that phase. The set of online nodes arriving in phase $t$ is denoted by $W^{(t)}$. Each node $w \in W^{(t)}$ connects to a subset of edges $E(w) \subseteq \Match^{(t)}$, meaning that the hyperedges incident to $w$ are $\{\{w\}\cup e: e\in  E(w)\}$.

We briefly explain how the matchings $\Match_i^{(t)}$ are constructed and how the edges $E(w)$ are picked:
\begin{enumerate}
\item On each component $C_i$, the matching $\Match_i^{(t)}$ is constructed based on the behaviour of the algorithm in phase $t-1$.
It draws inspiration from the edge-arrival instance in \cite{gamlath2019online}, together with the function $\hfunc(x) = e^x/(e+1)$ defined in \eqref{eq_threshold_funct}. The exact construction is described in Section \ref{sec:lastphase} and illustrated in Figure \ref{fig_edge_arr_w_threshold}.
\item For every online node $w \in W^{(t)}$, the edge set $E(w) \subseteq \Match^{(t)}$ is selected based on the behaviour of the algorithm during phase $t$. This part can be seen as incorporating the vertex-arrival instance in \cite{karp1990optimal}. The exact construction is described in Section \ref{sec:firstphases} and illustrated in Figure \ref{fig_vertex_arrival}.
\end{enumerate}
To summarize, the instance is a hypergraph $\mathcal{H} = (V,W,H)$ with offline nodes $V$, online nodes $W$ and hyperedges $H$ given by
\[V \coloneqq \bigcup_{i=1}^m C_i = \bigcup_{i=1}^m U_i \cup V_i \quad\;\, W \coloneqq \bigcup_{t=1}^T W^{(t)} \quad\;\, H \coloneqq \bigcup_{t = 1}^T \bigcup_{w \in W^{(t)}} \{\{w\}\cup e: e\in E(w)\}.\]

\subsection{Assumptions on the algorithm}
\label{subsub:analysis_overview}
To simplify the construction and analysis of our instance, we will make two assumptions on the algorithm.
First, we need the following definition, which relates the behaviour of an algorithm to the priority function $\phi$ defined in \eqref{eq_priority_funct}.

\begin{definition}
\label{def_thresh_resp}
	Fix $\varepsilon\geq 0$. 
	Let $x$ be the fractional solution given by an algorithm $\Alg$ after the arrival of an online node $w$.
	We say that $\Alg$ is \emph{$\varepsilon$-threshold respecting} on $w$ if $\phi(h) = \sum_{v\in h\setminus \{w\}}\hfunc(x(\delta(v)))\leq 1 + \varepsilon$ for all incident hyperedges $h \in \delta(w)$ with $x_h>0$.
    We also call $\Alg$ \emph{threshold respecting} if $\varepsilon = 0$.
\end{definition}

\begin{figure}
\center
\includegraphics[width = 0.4\textwidth]{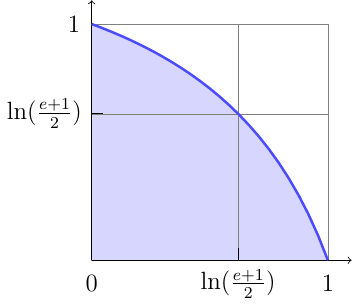}
\caption{An illustration of the region $R = \{(a, b) \in [0,1]^2 : f(a) + f(b) \leq 1\}$. 
The symmetric point at the boundary of the region has both coordinates $\ln((e+1)/2) \approx 0.62$. 
When an online node $w$ arrives, a threshold respecting algorithm ensures that the fractional matching $x$ satisfies $(x(\delta(u)), x(\delta(v))) \in R$ for every hyperedge $h = \{u,v,w\} \in \delta(w)$ with $x_h>0$ at the end of that iteration.}
\label{fig_threshold}
\end{figure}

\begin{remark}We emphasize that the property in \cref{def_thresh_resp} only needs to hold for the fractional solution $x$ after $w$ has arrived, and before the arrival of the next online node. In particular, it is possible that $\phi(h)>1+\varepsilon$ in later iterations. For a hyperedge $h = \{u,v,w\}\in \delta(w)$, Figure \ref{fig_threshold} shows the possible values of $x(\delta(u))$ and $x(\delta(v))$ such that $\phi(h)\leq 1$.
\end{remark}
The two assumptions that we make are the following. In \cref{sec:assumptions}, we show that they can be made without loss of generality. 
\begin{enumerate}
    \item\label{ass:threshold_respecting} The algorithm is \emph{$\varepsilon$-threshold respecting} on all online nodes in the first $T-1$ phases for some arbitrarily small $\varepsilon>0$.
    \item\label{ass:symmetric} The algorithm is \emph{symmetric} on each component $C_i = U_i \cup V_i$. In particular, for every $t\in \{1,\dots,T\}$, the $t$th vertices of $U_i$ and $V_i$ have the same fractional degrees throughout the execution of the algorithm.
\end{enumerate}

\subsection{Constructing the matching $\Match^{(t)}$}
\label{sec:lastphase}
In this section, we construct the matching $\Match_i^{(t)}$ for every component $i\in [m]$ and phase $t\in [T]$. 
We will only describe the matchings for a single component $C_i$, i.e., $\Match_i^{(1)}, \Match_i^{(2)}, \dots, \Match_i^{(T)}$, because the same construction applies to other components.
Intuitively, the value obtained by the algorithm in the first $T-1$ phases is already limited by Assumption~\ref{ass:threshold_respecting}. 
So, the goal of this construction is to prevent the algorithm from gaining too much value in the last phase $T$. 
In particular, we will show that it can only obtain $O(\sqrt{T}) + \varepsilon \; O(T^2)$ in the last phase on every component $C_i$. 

The matchings $\Match_i^{(1)}, \dots, \Match_i^{(T)}$ are adaptive to the behaviour of the algorithm in every phase. 
It is essentially the instance of \cite{gamlath2019online} with our threshold function incorporated. The vertex set of these matchings is on a bipartite graph, with $T$ nodes on both sides of the bipartition. Let us denote this bipartition as $U_i = \left\{1, \dots, T\right\}$ and $V_i = \left\{1, \dots, T\right\}$. We index them the same way due to the symmetry assumption of the algorithm (Assumption~\ref{ass:symmetric}). 
Each matching $\Match_i^{(t)}$ satisfies the invariant that $(u,v)\in \Match_i^{(t)}$ if and only if $(v,u)\in \Match_i^{(t)}$.

For an offline node $u\in V$, we denote its \emph{load} (or \emph{fractional degree}) at the end of phase $t$ as $\ell^{(t)}_u = x^{(t)}(\delta(u)) \in [0,1]$, where $x^{(t)}$ is the fractional matching generated by the algorithm at the end of phase $t$.

\begin{itemize}
\item $\Match_i^{(1)}$ is a matching of size one that consists of the single edge $(1, 1)$. 

\item At the end of phase $t$, we will call a node \emph{active} if it is incident to an edge $e = (u,v) \in \Match_i^{(t)}$ satisfying $\phi(e) = \hfunc(\ell_u^{(t)}) +  \hfunc(\ell_v^{(t)}) \geq 1$. All other nodes are said to be \emph{inactive} and will not be used in any of the matchings of later phases.
Let $\sigma_{t}(1)<\sigma_{t}(2)<\ldots <\sigma_{t}(\numact_{t})$ be the active nodes in $U_i$ 
at the end of phase $t$, where $r_{t}$ denotes the number of such active nodes. 
By the aforementioned invariant and Assumption~\ref{ass:symmetric}, the active nodes in $V_i$ are also $\sigma_{t}(1)<\sigma_{t}(2)<\ldots <\sigma_{t}(\numact_{t})$.  \item The matching at phase $t+1$ is then of size $\numact_{t} +1$ and is defined as:
\[\Match_i^{(t+1)} := \Big\{\Big(\sigma_t(k), \sigma_t(r_t + 2 - k)\Big), k \in \{1, \dots, r_t+1\}\Big\},\]
where we define $\sigma_t(\numact_t+1) := t+1$ for convenience.
In particular, note that $t+1 \in U_i$ and $t+1 \in V_i$ are two fresh nodes with zero load, which are always part of the matching $\Match_i^{(t+1)}$, but not part of any matching from a previous phase. Clearly, the invariant is maintained. 
Figure \ref{fig_edge_arr_w_threshold} illustrates the construction.
\end{itemize}

\begin{figure}
\center
\includegraphics[width = 0.95\textwidth]{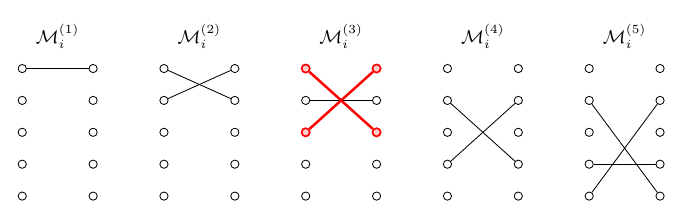}
\caption{In this example, the algorithm does not increase the edge $(1,3)$, and thus by symmetry the edge $(3,1)$, up to the threshold during phase $t = 3$. Hence, $\hfunc(\ell_1^{(3)}) + \hfunc(\ell_3^{(3)}) < 1$ and  nodes $1$ and $3$ become inactive from that point on. The maximum matching at the end of phase $5$ still has size five and consists of $\Match_i^{(5)}$, in addition to the two edges $(1,3)$ and $(3,1)$ that are below the threshold.}
\label{fig_edge_arr_w_threshold}
\end{figure}

Let us denote $q_t := (r_t+1)/2$. Observe that the nodes $\sigma_t(k) \in U_i$ and $\sigma_t(k) \in V_i$ for every $k \in \{1, \dots, \lceil q_t\rceil \}$ form a vertex cover of the matching $\Match_{i}^{(t+1)}$, meaning that every edge of the matching in phase $t+1$ is covered by one of these active nodes at phase $t$. Intuitively, this construction ensures that as $t$ gets large, these nodes have a high fractional degree. Consequently, the algorithm does not have a lot of room to increase the fractional value on any edge of $\Match_{i}^{(t+1)}$, due to the degree constraints. In order to upper bound the value that the algorithm can get in phase $t+1$, we will thus lower bound the fractional degree of the active nodes $\sigma_t(i)$ for $i \in \{1, \dots, \lceil q_t \rceil\}$. For this reason, we define:
\[\ell(t,i) := x^{(t)}\Big(\delta(\sigma_t(i))\Big) = \sum_{e \in \delta(\sigma_t(i))} x^{(t)}_e.\]
In words, this is the fractional degree of the $i^{th}$ active node at the end of phase $t$. One can now see $\{\ell(t,i)\}_{t,i}$ as a process with two parameters, which depends on the behaviour of the algorithm. To analyze this process, we will relate it to the CDF of the binomial distribution $B(t,1/2)$. We will in fact show that
\begin{equation}
\label{eq_loads}
\sum_{i = 1}^{\lceil q_{T-1} \rceil} 2(1 - \ell(T-1,i)) = O (\sqrt{T}) + \varepsilon \: O(T^2).
\end{equation}
Since the left-hand side is the residual capacity of the vertex cover of $\mathcal{M}^{(T)}_i$, this will yield an upper bound on the value obtained by the algorithm in component $C_i$ during the last phase. For intuition, Figure \ref{fig_process_algo} provides an example of $\ell(t,i)$ if the algorithm exactly reaches the threshold for every edge.

\begin{figure}[t]
	\center
	\begin{subfigure}{0.45 \textwidth}
		\center
		\includegraphics[width = \textwidth]{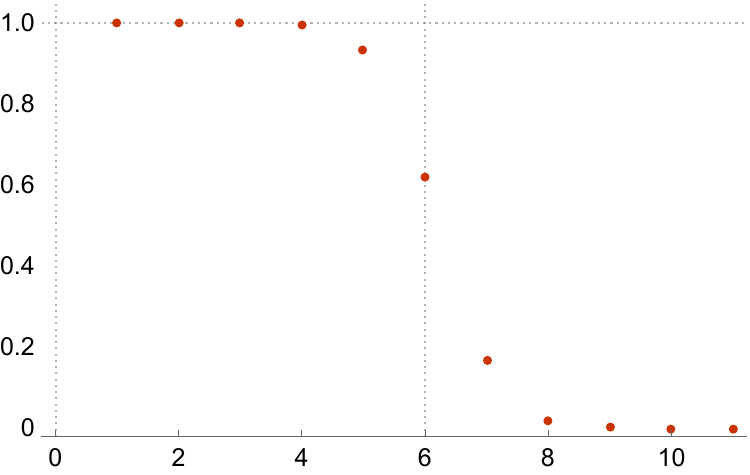}
		\caption{$t = 11, q_t = 6$}
		\label{figure_hsk_a}
	\end{subfigure}
	\hspace*{1cm}
	\begin{subfigure}{0.45\textwidth}
		\center
		\includegraphics[width = \textwidth]{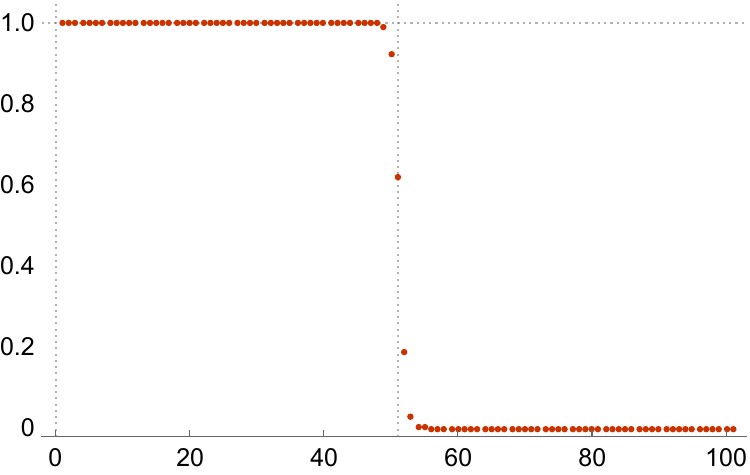}
		\caption{$t = 101, q_t = 51$}
	\end{subfigure}
	\caption{Plot of the $\ell(t,i)$ process for two different values of $t$ if the algorithm exactly matches the threshold at every phase. Observe that, since $t$ is odd, $(\sigma_t(q_t), \sigma_t(q_t)) \in \Match_{i}^{(t)}$ with the load of node $\sigma_t(q_t)$ staying at $\ln((e+1)/2) \approx 0.62$.}
	\label{fig_process_algo}
\end{figure}

\subsection{Bound for the last phase $T$}
In this section, we prove the following theorem by showing \eqref{eq_loads}.
\begin{theorem}
\label{thm_bound_last_phase}
During the last phase $T$, the value gained by the algorithm in each component $C_i$ is at most $O(\sqrt{T}) + \varepsilon \; O(T^2)$.
\end{theorem}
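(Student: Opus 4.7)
The plan is to exploit the vertex-cover structure of $\Match_i^{(T)}$, convert the algorithm's room-to-grow in phase $T$ into a sum of $1-\ell(T-1,i)$ terms, and then invoke Lemma~\ref{lemma_lower_bound_levels} together with the $\zeta$-tail bound to conclude.

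First, I would recall from the construction that the nodes $\sigma_{T-1}(k)\in U_i$ and $\sigma_{T-1}(k)\in V_i$ for $k\in\{1,\dots,\lceil m_{T-1}\rceil\}$ form a vertex cover of $\Match_i^{(T)}$. Since every edge of $\Match_i^{(T)}$ is incident to at least one such node, and each node $v$ has degree budget $1-\ell_v^{(T-1)}$ left to spend, the value the algorithm can place on $\Match_i^{(T)}$ during phase $T$ is at most the total remaining budget on this vertex cover. Using the symmetry assumption (so that $\sigma_{T-1}(k)\in U_i$ and $\sigma_{T-1}(k)\in V_i$ carry equal loads), this yields
\[
\val^{(T)}(\Alg)\big|_{C_i}\;\le\;2\sum_{i=1}^{\lceil m_{T-1}\rceil}\bigl(1-\ell(T-1,i)\bigr).
\]

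Next, I would plug in Lemma~\ref{lemma_lower_bound_levels}, which gives $\ell(T-1,i)\ge \xi(T-1,i)=a\,\zeta(T-1,m_{T-1}-i)+b-\varepsilon(T-1)$. Using the identity $1-b=2-2\ln((e+1)/2)=a$, one gets
\[
1-\xi(T-1,i)\;=\;a\bigl(1-\zeta(T-1,m_{T-1}-i)\bigr)+\varepsilon(T-1),
\]
so the bound becomes
\[
\val^{(T)}(\Alg)\big|_{C_i}\;\le\;2a\sum_{i=1}^{\lceil m_{T-1}\rceil}\bigl(1-\zeta(T-1,m_{T-1}-i)\bigr)\;+\;2\varepsilon(T-1)\,\lceil m_{T-1}\rceil.
\]

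For the $\varepsilon$-term I would simply bound $\lceil m_{T-1}\rceil\le \tfrac{T+1}{2}=O(T)$, giving an $\varepsilon\cdot O(T^2)$ contribution. For the main term I would observe that as $i$ ranges over $\{1,\dots,\lceil m_{T-1}\rceil\}$, the argument $m_{T-1}-i$ sweeps through the half-integers $\{m_{T-1}-1,m_{T-1}-2,\dots,0\}$ (if $r_{T-1}$ is odd) or $\{m_{T-1}-1,\dots,-\tfrac12\}$ (if $r_{T-1}$ is even). The first case is directly bounded by the fifth property of $\zeta$, which yields $\sum_{y\ge 0}\bigl(1-\zeta(T-1,y/2)\bigr)\le 1+\tfrac12\sqrt{T-1}$. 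In the even case there is one extra term at $y=-1/2$, bounded trivially by $1$. Either way the main term is $O(\sqrt{T})$.

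The only minor subtlety is the parity bookkeeping for the range of $m_{T-1}-i$ (half-integer vs.~integer, and whether $-1/2$ appears), but that is genuinely routine; the conceptual content is entirely carried by the vertex-cover observation plus Lemma~\ref{lemma_lower_bound_levels} and the fifth property of $\zeta$. Combining the two contributions yields the desired estimate
\[
\val^{(T)}(\Alg)\big|_{C_i}\;\le\;O(\sqrt{T})+\varepsilon\cdot O(T^2),
\]
completing the proof.
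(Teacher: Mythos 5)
Your proposal matches the paper's proof essentially step for step: the vertex-cover/residual-capacity bound $2\sum_{k\le\lceil m_{T-1}\rceil}(1-\ell(T-1,k))$, the substitution of Lemma~\ref{lemma_lower_bound_levels} with $1-b=a$, the $\varepsilon(T-1)\lceil m_{T-1}\rceil\le\varepsilon\,O(T^2)$ bound, and the tail bound (property 5 of $\zeta$) with one extra boundary term for the parity issue. The only detail the paper adds is a one-line dismissal of the degenerate case $r_{T-1}=0$ (where the cover is empty and the phase contributes at most the single fresh edge), which your argument implicitly needs but which is trivial.
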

In order to be able to get a lower bound on $\ell(t,i)$, we now relate it to a process which is simpler to analyze, defined as follows on $\mathbb{N} \times \mathbb{Z}/2$:
\begin{align*}
    \psi(t,y)=\Pr_{X\sim B(t,\frac12)}\left[X< \frac{t}{2}+y\right] + \frac12 \Pr_{X\sim B(t,\frac12)}\left[X= \frac{t}{2}+y\right],
\end{align*}
where $B(t, \frac12)$ is the binomial distribution with parameters $t$ and $\frac12$ (see Figure \ref{fig_process_binomial} for an illustration). This process has the following crucial properties, whose proofs are given in \cref{app:properties_psi}.

\begin{figure}[t]
   \center
   \begin{subfigure}{0.4 \textwidth}
       \center
       \includegraphics[width = \textwidth]{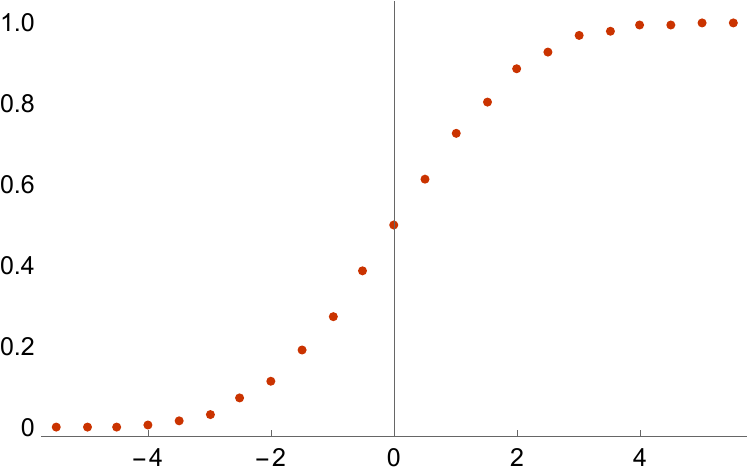}
       \caption{$t = 11$}
       \label{figure_hsk_a}
   \end{subfigure}
   \hspace*{2cm}
   \begin{subfigure}{0.4\textwidth}
       \center
       \includegraphics[width = \textwidth]{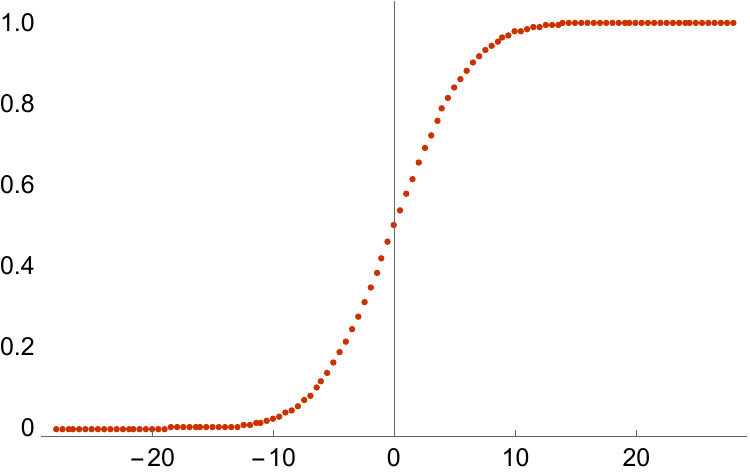}
       \caption{$t = 101$}
   \end{subfigure}
   \caption{Plot of $\psi(t,y)$ for two different values of $t$, the horizontal axis represents $y \in \mathbb{Z}/2$. Here $\epsilon=0$, as it should be thought of as a very small constant.}
   \label{fig_process_binomial}
\end{figure} 

\begin{lemma}
\label{lemma_psi}
    The function $\psi$ satisfies:
    \begin{enumerate}
        \item For all $t$, $\psi(t,y)$ is nondecreasing in $y$ with $\psi(t,0) = \frac12$ and  $\psi(t,\frac{t+1}{2})= 1$.
        \item For all $t$ and $y$, we have: $\psi(t+1,y) = \frac12\psi(t,y-\frac12) + \frac12\psi(t,y+\frac12)$.
        \item For all $t$, we have $\sum_{y=0}^\infty (1-\psi(t,\frac12 y))\leq 1+\sqrt{t}$.
    \end{enumerate}
\end{lemma}

We then relate the process $\ell(t,i)$ to a linear transformation of the process $\psi(t,i)$, by defining:
\[\xi(t,i) := a\; \psi\left(t,q_t - i\right)+b - \varepsilon t.\]
We choose $a$ and $b$ such that whenever the algorithm exactly hits the threshold for every edge, we have $\xi(t,t/2)+\epsilon t=\ln((e+1)/2)=\ell(t,t/2)$ for all even $t$ and $\lim_{t\to\infty}\xi(t,x) + \epsilon t =1=\lim_{t\to\infty}\ell(t,x)$ for all $x$. Hence, we set $a := 2 - 2\ln\Big((e+1)/2)\Big) \approx 0.76$ and $b := 2 \ln\Big((e+1)/2\Big) - 1 \approx 0.24$. This choice will allow us to lower bound $\ell(t,i)$ by $\xi(t,i)$ for all $i\in \{1, \dots, \lceil q_t \rceil\}$ in \cref{lemma_lower_bound_levels}.

We first need one more lemma. As a reminder, $r_t$ is the number of active nodes on each side of the bipartition at the end of phase $t$ and $q_t := (r_t+1)/2$. For every active node $u = \sigma_t(i)$, where $i \leq r_t$, let us define $d_t(u)=|i-q_t|$, the distance of $u$ to the middle point $q_t$ of the active nodes at time $t$.
\begin{lemma}
    \label{lem:distance}
    Let $u=\sigma_t(i)$ for $i\leq r_t$, then the following holds:
    \begin{align*}
        d_t(u)\leq d_{t-1}(u) + \frac12 \quad \text{if } i < q_t \qquad  \text{ and } \qquad d_t(u)\leq d_{t-1}(u) - \frac12 \quad \text{if } i > q_t.
    \end{align*}
\end{lemma}
\begin{proof}
    Let $S$ be the set of indices $i\leq r_{t-1} + 1$ such that $\sigma_{t-1}(i)$ is not active after phase $t$.
    Let $j$ be such that $\sigma_{t-1}(j)=u$. Consider the case that $i<q_t$. Let $c=|s\in S: i<s<r_{t-1}+2-i |$.
    We have $d_t(u)=q_t-i=q_{t-1} +\frac12-\frac12c -i\leq q_{t-1} +\frac12 -i= d_{t-1}(u)+\frac12$. The proof for $i>q_t$ is similar.
\end{proof}

We are now ready to prove the desired lower bound on the loads.

\begin{lemma}
\label{lemma_lower_bound_levels}
For every $t\in \{1, \dots, T-1\}$ and $i\in \{1, \dots, \lceil q_t \rceil\}$, we have \[\ell({t,i})\geq \xi(t,i),\]
where $q_t = (r_t + 1)/2$ and $r_t$ is the number of active nodes at the end of phase $t$, when $r_t\geq 1$.
\end{lemma}

\begin{proof}
We will prove this statement by induction on $t$. \medskip

\textbf{Base case.} For the base case, consider $t = 1$. There are two possibilities, either the edge $(1,1)$ does not make it to the threshold, i.e. $2\hfunc(\ell_1^{(1)}) < 1$, in which case $r_t = 0$, $q_t = 0$, and the statement is then trivially satisfied. If the edge $(1,1)$ makes it to the threshold, then $2 \hfunc(\ell_1^{(1)}) = 2 \hfunc(\ell(1,1)) \geq 1$, which is equivalent to $\ell(1,1) \geq \ln((e+1)/2)$ by definition of $\hfunc(x) = e^x/(e+1)$. Observe that in this case $r_t = q_t = 1$, leading to
\[\ell(1,1) \geq \ln((e+1)/2) = \frac{a}{2} + b = a \: \psi(1,0) + b = \xi(1,1) + \varepsilon t \geq \xi(1,1), \]
where we have used the fact that $\psi(1,0) = 1/2$. 
\medskip

 \textbf{Inductive step.}
Suppose now by induction that the statement holds for $t-1$, let $r_t$ be the number of active nodes at the end of phase $t$, let $q_t := (r_t+1)/2$ and consider an arbitrary $i \in \{1, \dots, \lceil q_t \rceil\}$.
We consider the cases where $i = q_t$ and $i < q_t$ separately. \medskip

Let us first consider the case where $i = q_t = (r_t + 1)/2$, which can only occur when $r_t$ is odd. Observe that this means the edge $(\sigma_t(i), \sigma_t(i))$ belongs to the matching $\Match_i^{(t)}$ and exceeds the threshold, i.e. $\ell(t,i) \geq \ln((e+1)/2)$. Using the fact that $\psi(t,0) = 1/2$ for all $t$, $i = q_t$ and the exact same arguments as above, we get
\[\ell(t,i) \geq \ln((e+1)/2) = \frac{a}{2} + b = a \: \psi(t,0) + b = a \: \psi(t,q_t - i) + b = \xi(t,i) + \varepsilon t \geq \xi(t,i).\]\medskip

Consider now the case where $i < q_t$. Let $e = (u,v) = (\sigma_t(i), \sigma_t(r_t + 1 - i)) \in \Match_i^{(t)}$ and observe that $e$ exceeds the threshold, i.e. $\hfunc(\ell^{(t)}_u) + \hfunc(\ell^{(t)}_v) = \hfunc(\ell(t,i)) + \hfunc(\ell(t, r_t + 1 -i)) \geq 1$.
Let us pick indices $j,k$ such that $u = \sigma_{t-1}(j)$ and $v = \sigma_{t-1}(r_{t-1} + 1 - k)$.
We have:
\begin{align*}
\ell(t,i) = \ell(t-1, j) + x_e^{(t)} \quad \text{and} \quad \ell(t,r_t+1-i) = \ell(t-1, r_{t-1} + 1 -k) + x_e^{(t)}.
\end{align*}

If $k=0$, then $v$ has not appeared in any of the prior matchings, so $\ell^{(t-1)}_v = 0$.
In particular, we have $\hfunc(\ell_v^{(t-1)})=\hfunc(0)= 1-\hfunc(1)\leq 1+\varepsilon - \hfunc(\xi(t-1, k ))$.

Otherwise, we have $(\sigma_{t-1}(k), v) \in \Match_{i}^{(t-1)}$ and by using the fact that the algorithm is $\varepsilon$-threshold respecting, we get $\hfunc(\ell({t-1,k}))+\hfunc(\ell_v^{(t-1)})\leq 1+\varepsilon$. By using the inductive hypothesis $\ell({t-1,k}) \geq \xi(t-1,k)$, we get
$\hfunc(\ell_v^{(t-1)})\leq 1+\varepsilon - \hfunc(\xi(t-1,k)).$
Since edge $e$ exceeds the threshold at the end of phase $t$, we have $\hfunc(\ell^{(t-1)}_u + x_e^{(t)}) + \hfunc(\ell^{(t-1)}_v + x_e^{(t)}) \geq 1$, which leads to
\begin{align*}
x_e^{(t)} &\geq -\ln \Big(\hfunc(\ell_u^{(t-1)})+\hfunc(\ell_v^{(t-1)})\Big) \\
&\geq -\ln\Big(1+\varepsilon - \hfunc(\xi(t-1,k)) +\hfunc(\ell_u^{(t-1)})\Big) \tag*{(by induction hypothesis)}\\
&\geq  \hfunc(\xi(t-1,k)) -\hfunc(\ell_u^{(t-1)}) - \varepsilon \qquad \tag*{(since $\ln(1+x)\geq x$)} \\
&\geq f'(\ell_u^{(t-1)})\Big(\xi(t-1,k)-\ell_u^{(t-1)}\Big)  - \varepsilon \qquad \tag*{\Big(by convexity of $f(x)=\frac{\exp(x)}{e+1}$\Big)}\\&= f'(\ell_u^{(t-1)})\Big(\xi(t-1,k)-\ell(t-1, j)\Big)  - \varepsilon.
\end{align*}
Finally, since we have $f'(\ell_u^{(t-1)})=\hfunc(\ell_u^{(t-1)})\geq  f\Big(\ln((e+1)/2)\Big)= 1/2$, so that $x_e^{(t)}\geq \frac12 \Big(\xi(t-1,k)-\ell(t-1,j)\Big)  - \varepsilon$, and hence:
    \begin{align*}
        \ell(t,i)&= \ell_u^{(t-1)} + x_e^{(t)} \geq \ell_u^{(t-1)} + \frac12 \Big(\xi(t-1,k)-\ell(t-1,j)\Big)  - \varepsilon\\
        &=\frac12 \ell(t-1,j) + \frac12\xi(t-1,k)  - \varepsilon \\&\geq \frac12\xi(t-1,j) + \frac12\xi(t-1,k)  - \varepsilon\quad \tag*{(by induction hypothesis)}\\
        &=\frac{a}{2}\Big(\psi\Big(t-1,d_{t-1}(u)\Big) + \psi\Big(t-1,d_{t-1}(v)\Big)\Big) +b - \varepsilon t\\
        &\geq \frac{a}{2}\Big(\psi\Big(t-1,d_{t}(u)+ \frac{1}{2}\Big) +
        \psi\Big(t-1,d_{t}(v)- \frac{1}{2}\Big)\Big) +b - \varepsilon t \quad \tag*{(by \cref{lem:distance})}\\
        &= a \; \psi\left(t,q_t -i\right) +b - \varepsilon t  \quad \tag*{(by property (2) in \cref{lemma_psi})} \\
        &= \xi(t,i)\\
    \end{align*}
This concludes the proof of the inductive step and thus of the lemma.
\end{proof}
We are now ready to bound the value obtained by the algorithm in the last phase and thus prove Theorem \ref{thm_bound_last_phase}.
\begin{proof}[Proof of \cref{thm_bound_last_phase}]
Consider the end of phase $T-1$. Observe that the nodes $\sigma_{T-1}(k) \in U_i$ and $\sigma_{T-1}(k) \in V_i$ for $k \in \{1, \dots, \lceil q_{T-1}\rceil \}$ form a vertex cover of the final matching $\Match_{i}^{(T)}$. Because of the degree constraints, this means that the value the algorithm can gain on the last phase $T$ is at most twice the following expression:
\begin{align*}
&\sum_{k = 1}^{\lceil q_{T-1} \rceil} \Big(1 - \ell(T-1,k)\Big)\leq  \sum_{k = 1}^{\lceil q_{T-1} \rceil} \Big(1 - \xi(T-1,k)\Big)\\
&=  a\sum_{k = 1}^{\lceil q_{T-1} \rceil} \Big(1 -  \; \psi(T-1,q_{T-1} - k) \Big)+ \varepsilon (T-1)\lceil q_{T-1} \rceil \quad \tag*{(definition of $\xi$ and $1-b=a$)}\\
& \leq \varepsilon \: T^2 + a\sum_{y = -1}^{\infty} \left(1 -  \: \psi\left(T-1, \frac{y}{2}\right)\right) \quad \tag*{(by change of variables $y := 2(q_{T-1} - k)$)}\\&\leq  \varepsilon \: T^2 + a \left(2+\sqrt{T-1}\right) \qquad \tag*{(by property (3) of \cref{lemma_psi})} \\
&= O(\sqrt{T}) + \varepsilon \; O(T^2).
\end{align*}
\end{proof}

\subsection{Connecting the matching $\mathcal{M}^{(t)}$ to the online nodes}
\label{sec:firstphases}

In this section, we connect the matching $\Match^{(t)} = \cup_{i=1}^m \Match_i^{(t)}$ to the online vertices to form hyperedges, for every phase $t\in [T]$.
The way in which they are connected is similar to the vertex-arrival instance of \cite{karp1990optimal} for bipartite graphs.
The main idea is to obfuscate the partners of the online nodes in the optimal matching.

The following construction is an adaptation of the vertex-arrival instance for bipartite graphs \cite{karp1990optimal} to tripartite hypergraphs. Given a graph matching $\Match$ on the offline nodes, the first online node connects to every edge in $\Match$. After the algorithm sets fractional values on every edge of $\Match$, the second online node connects to $\Match \setminus \{e_1\}$, where $e_1$ is the edge in the matching with the lowest fractional value. More generally, for every $k \in \{1, \dots, |\Match|\}$, the $k^{th}$ online node connects to the $|\Match|-k+1$ edges $\Match \setminus \{e_1, \dots, e_{k-1}\}$, and $e_k$ is defined as the edge having the lowest fractional value among them at the end of the $k^{th}$ iteration. This instance is illustrated in Figure \ref{fig_vertex_arrival}.

We now state the guarantee obtained by this construction, parametrized by the maximum (fractional) degree $\Delta \in [0,1]$ attained by an offline node. The proof of the following lemma is shown in \cref{app:proofs_upper_bound}.
\begin{lemma}
\label{lemma_vertex_arrival}
For any graph matching $\Match = (V, E)$, there exists an online tripartite hypergraph instance $\mathcal{H} = (V, W, H)$ such that $\Gamma(\mathcal{H}) = \Match$ and $\opt(\mathcal{H}) = |\Match|$.
Moreover, for any fractional algorithm $\Alg$ whose returned solution $x$ satisfies $x(\delta(v)) \leq \Delta$ for all offline nodes $v\in V$, we have
\[\val(\Alg, \mathcal{H}) \leq (1 - e^{-\Delta})|\Match| + 3/2.\]
\end{lemma}

\begin{figure}
\center
\includegraphics[width = 0.85\textwidth]{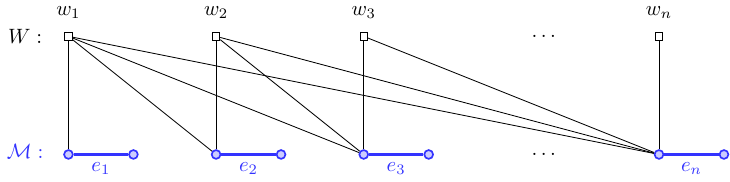}
\caption{An illustration of the instance constructed in Lemma \ref{lemma_vertex_arrival}}
\label{fig_vertex_arrival}
\end{figure}

The way now in which we apply this construction is by partitioning the matching $\Match^{(t)}$ into submatchings based on the load (or fractional degree) of the vertices, and applying \Cref{lemma_vertex_arrival} on each submatching separately. More precisely, let us fix $\mesh := |\mathcal{M}^{(t)}|^{-1/3}$ and $N:= \left\lceil 2 / \mesh \right\rceil$. 

We partition the edges of the matching $\Match^{(t)}$ into $N^2$ submatchings as follows
\[\Match^{(t)}(i,j) := \left\{(u,v) \in \Match^{(t)} : \ell_u \in \left[\frac{i-1}{N}, \frac{i}{N}\right] \; , \; \ell_v \in \left[\frac{j-1}{N}, \frac{j}{N}\right] \right\}\]
for all $i,j \in [N]$. 
Then, we apply the construction illustrated in Figure \ref{fig_vertex_arrival} to each submatching $\Match^{(t)}(i,j)$. 
This finishes the description of our instance. 

\subsection{Bound for the first $T-1$ phases}

Recall from \Cref{subsec:construction_overview} that our instance consists of $T$ phases. 
For ease of analysis, we will split the total value gained by the algorithm into the value gained in each phase. 
For an algorithm $\mathcal{A}$, let $\val^{(t)}(\mathcal{A})$ denote the value obtained by $\mathcal{A}$ in phase $t$.
The next lemma upper bounds $\val^{(t)}(\mathcal{A})$ for a threshold-respecting algorithm $\mathcal{A}$, in terms of the loads of the offline nodes at the end of phase $t-1$.
Recall that $W^{(t)}$ is the set of online nodes which arrive during phase $t$.

\begin{lemma}
\label{lemma_vertex_arrival_blocks}
If $\mathcal{A}$ is threshold-respecting on $W^{(t)}$, then 
\[\val^{(t)}(\mathcal{A}) \leq \sum_{(u,v) \in \Match^{(t)}} \Big(1 - \hfunc(\ell^{(t-1)}_u) - \hfunc(\ell^{(t-1)}_v)\Big) + 15\, |\Match^{(t)}|^{2/3}.\]
\end{lemma}

\begin{proof} 
Since we are considering a threshold-respecting algorithm, we can compute an upper bound on the amount that the algorithm can put on an edge $e \in \Match^{(t)}(i,j)$ while staying below the threshold:
\begin{align*}
\Delta(i,j)&:=\max\left\{x:\hfunc\left(\frac{i-1}{N}+x\right)+\hfunc\left(\frac{j-1}{N}+x\right) \leq 1 \right\} 
= \ln \left(\frac{e+1}{\exp\left(\frac{i-1}{N}\right)+\exp\left(\frac{j-1}{N}\right)}\right).
\end{align*}
A simpler way to write this equation is as follows:
\[\exp(-\Delta(i,j)) = \hfunc\left(\frac{i-1}{N}\right) + \hfunc\left(\frac{j-1}{N}\right).\]
By Lemma \ref{lemma_vertex_arrival}, we know that there exists sets of online nodes $W^{(t)}(i,j)$ which, together with the matchings $\Match^{(t)}(i,j)$, form online hypergraphs $\mathcal{H}^{(t)}(i,j)$ such that
\begin{align*}
\val\Big(\mathcal{A}, \mathcal{H}^{(t)}(i,j)\Big) &\leq \Big(1 - \exp(-\Delta(i,j)\Big) \: \Big|\Match^{(t)}(i,j)\Big| + \frac{3}{2} \qquad \forall i,j \in [N].
\end{align*}
Now, observe that for an edge $\{u,v\} \in \Match^{(t)}(i,j)$ with loads $\ell^{(t-1)}_u$ and $\ell^{(t-1)}_v$, we have
\[\hfunc\left(\frac{i-1}{N}\right) \geq \hfunc\left(\ell^{(t-1)}_u - \frac{1}{N}\right) \geq \hfunc(\ell^{(t-1)}_u) -\frac{1}{N}.\]
The first inequality follows from the fact that $\hfunc$ is a non-decreasing function. The second inequality follows from the fact that $\hfunc'(x) = \hfunc(x) \leq 1$ for every $x \in [0, 1]$. 
Similarly, we have
\[\hfunc\left(\frac{j-1}{N}\right) \geq \hfunc(\ell^{(t-1)}_v) - \frac{1}{N}.\]
Hence, the value gained by the algorithm in phase $t$ can be upper bounded as
\begin{align*}
\val^{(t)}(\mathcal{A}) &= \sum_{i,j=1}^{N}\val\Big(\mathcal{A}, \mathcal{H}^{(t)}(i,j)\Big) \\
&\leq \sum_{i,j=1}^{N}\left(1 - \hfunc\left(\frac{i-1}{N}\right) - \hfunc\left(\frac{j-1}{N}\right)\right)\Big|\Match^{(t)}(i,j)\Big| + \frac{3}{2}N^2 \\
&\leq \sum_{i,j=1}^{N} \sum_{(u,v)\in \Match^{(t)}(i,j)}\left(1 - \hfunc(\ell^{(t-1)}_u) - \hfunc(\ell^{(t-1)}_v) + \frac{2}{N}\right) + \frac{3}{2}N^2 \\
&= \sum_{(u,v) \in \Match^{(t)}} \left(1 - \hfunc(\ell^{(t-1)}_u) - \hfunc(\ell^{(t-1)}_v)\right) + \frac2N |\Match^{(t)}| + \frac{3}{2}N^2  \\
&\leq \sum_{(u,v) \in \Match^{(t)}} \left(1 - \hfunc(\ell^{(t-1)}_u) - \hfunc(\ell^{(t-1)}_v)\right) + \mesh |\Match^{(t)}| + 14\mesh ^ {-2}.
\end{align*} 
For the last inequality, since $N = \lceil 2/\mesh \rceil$, we have used the bounds $2/N \leq \mesh$ and $N^2 \leq (2/\mesh + 1)^2 = (4 + 4\mesh + \mesh^2)/\mesh^2 \leq 9/\mesh^2$ because $\mesh \in (0,1]$.
In fact, $\mesh = |\Match^{(t)}|^{-1/3}$ so the bound becomes
\[\val^{(t)}(\mathcal{A}) \leq \sum_{(u,v) \in \Match^{(t)}} \left(1 - \hfunc(\ell^{(t-1)}_u) - \hfunc(\ell^{(t-1)}_v)\right) + 15|\Match^{(t)}|^{2/3}.\]
\end{proof}

From the definition of $f$ and the construction of the matching $\mathcal{M}^{(t)}$, we can convert the previous bound into the following expression. We remark that the threshold-respecting property is only used in the proof of \Cref{lemma_vertex_arrival_blocks}.
\begin{lemma} 
\label{lem:value_per_phase}
If $\mathcal{A}$ is threshold-respecting on $W^{(t)}$, then 
\[\val^{(t)}(\mathcal{A}) \leq \frac{e-1}{e+1} \,  m + 15\, t^{2/3} m^{2/3}.\]
\end{lemma}

\begin{proof}
By splitting the matching $\Match^{(t)}$ based on the $m$ components, we can rewrite the bound in \Cref{lemma_vertex_arrival_blocks} as
\begin{align}
\label{eq_bound_induction}
\val^{(t)}(\mathcal{A}) &\leq \sum_{i=1}^m \sum_{(u,v) \in \Match^{(t)}_i} \left(1 - \hfunc(\ell^{(t-1)}_u) - \hfunc(\ell^{(t-1)}_v)\right) + 15\,|\Match^{(t)}|^{2/3}.
\end{align}
Fix a component $i \in [m]$, and let $\mathcal{E}_i:= \Big\{e \in \Match^{(t-1)}_i \mid \hfunc(\ell_u^{(t-1)}) + \hfunc(\ell_v^{(t-1)}) \geq 1\Big\}$ be the subset of edges in the matching $\Match^{(t-1)}_i$ which exceed the threshold at the end of phase $t-1$. 
By the construction of $\Match^{(t)}_i$ in \cref{sec:lastphase}, we know that its node set consists of the nodes incident to $\mathcal{E}_i$, in addition to two new fresh nodes whose load is 0 at the end of phase $t-1$. This allows us to expand the inner sum in \eqref{eq_bound_induction} as:
\begin{align*}
\sum_{(u,v) \in \Match^{(t)}_i} 1 - \hfunc(\ell^{(t-1)}_u) - \hfunc(\ell^{(t-1)}_v) &= 1 - 2 \hfunc(0) + \sum_{(u,v) \in \mathcal{E}_i} 1 - \hfunc(\ell^{(t-1)}_u) - \hfunc(\ell^{(t-1)}_v)\\
&\leq 1 - 2 \hfunc(0) = \frac{e-1}{e+1}
\end{align*}
where the inequality follows from definition of $\mathcal{E}_i$.
Plugging this into \eqref{eq_bound_induction} with the bound $|\Match^{(t)}| \leq tm$ (which is immediate by the construction in \cref{sec:lastphase}) yields the desired result.
\end{proof}

For an $\varepsilon$-threshold-respecting algorithm, we pick up an extra $\varepsilon t m$ term.

\begin{corollary} 
\label{cor_total_value_first_phases}
If $\mathcal{A}$ is $\varepsilon$-threshold-respecting on $W^{(t)}$ for some $\varepsilon\geq 0$, then 
\[\mathcal{V}^{(t)}(A) \leq \frac{e-1}{e+1}m + 15 \: (tm)^{2/3} + \varepsilon t m.\]
\end{corollary}
\begin{proof} 
Fix an edge $e\in \Match^{(t)}$.
Let $h_1, h_2, \dots, h_k$ be the hyperedges arriving in phase $t$ which contain $e$, denoted such that $h_i$ arrives before $h_j$ if and only if $i<j$.
Let $j\in [k]$ be the smallest index such that $\phi(h_j)> 1$ immediately after $\Alg$ assigns $x_{h_j}$ to $h_j$.
Let $z_{h_j}\geq 0$ be the largest value such that $\phi(h_j)\geq 1$ if $\Alg$ were to assign $x_{h_j}-z_{h_j}$ to $h_j$ instead.
Define $z_{h_i}:= 0$ for all $i<j$, and $z_{h_i}:= x_{h_i}$ for all $i>j$.
Since $\Alg$ is $\varepsilon$-threshold-respecting on $W^{(t)}$, we have $\sum_{i=1}^k z_{h_i} \leq \varepsilon$ because $f$ is convex and $f' = f$. 

Let $z$ be the vector obtained by repeating this procedure on every edge $e\in \Match^{(t)}$.
Then, $\1^\top z \leq \varepsilon tm$ as $|\Match^{(t)}| \leq tm$.
Moreover, observe that the algorithm which assigns $x-z$ in phase $t$ is threshold-respecting on $W^{(t)}$.
Thus, we can apply \cref{lem:value_per_phase} to obtain the desired upper bound on $\1^\top(x-z)$.

\end{proof}

\subsection{Putting everything together}
In this section, we complete the proof of \cref{thm:fractional}. 
Since we assumed that the algorithm is $\varepsilon$-threshold respecting in the first $T-1$ phases, we can apply \Cref{cor_total_value_first_phases} to upper bound the value obtained in the first $T-1$ phases as
\[\sum_{t=1}^{T-1}\left(\frac{e-1}{e+1}m + \varepsilon t m + O\left((tm)^{2/3}\right) \right)  \leq \frac{e-1}{e+1} Tm + \varepsilon T^2m + O \left(T  ^{5/3}m^{2/3}\right).\]
By Theorem \ref{thm_bound_last_phase}, the value gained by the algorithm on each component $C_i$ during the last phase $T$ is at most $O(\sqrt{T} + \varepsilon T^2)$. Hence, the algorithm gains at most $O(\sqrt{T}m + \varepsilon T^2m)$ in the last phase.

We now argue that our instance $\mathcal{H}=(V,W,H)$ has a perfect matching.

\begin{lemma}
\label{lemma_OPT}
Our adversarial instance $\mathcal{H} = (V,W,H)$ satisfies $\opt(\mathcal{H}) = T m.$
\end{lemma}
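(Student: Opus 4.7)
The plan is to prove the two inequalities separately. The upper bound $\opt(\mathcal{H})\le Tm$ is immediate: each component $C_i = U_i\cup V_i$ contains only $2T$ offline vertices, so any matching restricted to $C_i$ has size at most $T$, and summing over the $m$ disjoint components yields $\opt(\mathcal{H})\le Tm$.

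For the lower bound I would exhibit an explicit integral matching of size $Tm$ built from ``disappearing edges''. Fix a component $C_i$ and, for every offline vertex $v\in U_i\cup V_i$, let $t_v\in\{1,\ldots,T\}$ be the last phase in which $v$ appears in some $\Match_i^{(t)}$ (such a phase exists because $v$ is introduced fresh in phase $v$), and let $e_v\in \Match_i^{(t_v)}$ be the unique edge of that matching incident to $v$. A short case analysis using the symmetric construction of $\Match_i^{(t)}$ shows that both endpoints of $e_v$ share the same disappearing time: when $t_v<T$ the edge $e_v$ must be below threshold (otherwise $v$ would persist to phase $t_v+1$), which forces both endpoints to become inactive simultaneously; when $t_v=T$, both endpoints appear together in $\Match_i^{(T)}$. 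Hence $M_i^*:=\{e_v:v\in U_i\cup V_i\}$ is well-defined, its edges are pairwise disjoint (edges with different $t_v$ involve disjoint vertex sets, and edges from the same $\Match_i^{(t)}$ are disjoint since $\Match_i^{(t)}$ is a matching), and each of the $2T$ vertices of $C_i$ is covered exactly once, so $|M_i^*|=T$. Taking the union over components gives an offline matching $M^*$ of size $Tm$ on $\Gamma(\mathcal{H})$.

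Finally, I would lift $M^*$ to an integral matching of hyperedges in $H$ by exhibiting, for each $e\in M^*$ from phase $t_e$, a distinct online node $w_e\in W^{(t_e)}$ with $\{w_e\}\cup e\in H$. Each such $e$ lies in $\Match^{(t_e)}$, which the construction in \cref{lemma_vertex_arrival_blocks} partitions into submatchings before applying \cref{lemma_vertex_arrival}; the proof of \cref{lemma_vertex_arrival} explicitly produces $|\Match|$ online nodes realising a perfect pairing of $w_k$ with $e_k$. Restricting this pairing to $M^*\cap\Match^{(t)}$ picks a distinct online node per selected edge within each phase, and because the sets $W^{(t)}$ are disjoint across phases no conflicts arise. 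Thus $\{\{w_e\}\cup e:e\in M^*\}$ is a valid integral matching of size $Tm$ in $\mathcal{H}$, establishing $\opt(\mathcal{H})\ge Tm$. The main subtlety is the step that both endpoints of a disappearing edge share the same $t_v$, which relies on the mirror-symmetric construction of $\Match_i^{(t)}$ and on the fact that an inactive vertex never reappears in a later matching.
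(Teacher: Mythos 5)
Your proof is correct and follows essentially the same route as the paper: your set of ``disappearing edges'' is exactly the paper's matching, namely $\Match_i^{(T)}$ together with the below-threshold edges from earlier phases, and your lifting to hyperedges via the distinct online node paired with each edge in the \cref{lemma_vertex_arrival} construction is the same argument the paper uses. The only difference is presentational: you verify the well-definedness and disjointness per vertex in more detail, while the paper states the matching directly and proves the per-phase statement $\opt^{(t)}(\mathcal{H}) = tm$.
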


\begin{proof} 
We prove that for every $t\in [T]$, there exists a hypergraph matching of size $tm$ at the end of phase $t$.
Let $C_i$ be a component with bipartition $U_i = [T]$ and $V_i = [T]$.
It suffices to show that there exists a graph matching $\widetilde \Match^{(t)}_i$ with vertex set $[t]$ on each side.
This is because $\widetilde \Match^{(t)}:= \cup_{i=1}^m \widetilde \Match^{(t)}_i$ can be extended to a hypergraph matching in $\mathcal{H}$ by our construction (see \Cref{lemma_vertex_arrival}).
Let $E^{(t)}_i\subseteq \Match^{(t)}_i$ be the edges whose endpoints are not active at the end of phase $t$.
Then, a simple inductive argument on $t\geq 1$ shows that $\cup_{s=1}^{t-1} E^{(s)}_i \cup \Match^{(t)}_i$ is a graph matching with vertex set $[t]$ on each side (see Figure \ref{fig_edge_arr_w_threshold} for an example).
\end{proof}

By Lemma \ref{lemma_OPT}, the competitive ratio of the algorithm is at most 
\[\frac{e-1}{e+1} + O(\varepsilon T + T^{2/3}m^{-1/3}+T^{-1/2}).\] 
Hence, letting $m \to \infty$, picking $T = o(\sqrt{m})$ such that $T\to \infty$ and setting $\varepsilon = o(1/T)$, we conclude that the competitive ratio is upper bounded by $(e-1)/(e+1)$, thus finishing the proof of \cref{thm:fractional}.

\section{Integral algorithm for bounded degree hypergraphs}
\label{app:bounded_degree_algo}

In this section, we show that {\sc Random} (\cref{bounded_degree_algo}) performs better than the greedy algorithm when the online nodes have bounded degree.

\begin{algorithm}
    \caption{{\sc Random} algorithm for bounded degree hypergraphs}
    \label{bounded_degree_algo}
    \begin{algorithmic}
    \State $\mathbf{Input: }$ $k$-uniform hypergraph $\mathcal{H} = (V,W,H)$ with online arrivals of each $w \in W$ with $|\delta(w)| \leq d$.
    \State $\mathbf{Output: }$ Matching $\Match \subseteq H$
    \vspace{0.2cm}
    \State set $\mathcal{M} \gets \emptyset$
    \State \textbf{when $w \in W$ arrives with $\delta(w) \subseteq H$:}
     \State \qquad pick uniformly at random $h\in\delta(w)$ among the hyperedges that are disjoint from $\Match$
     \State \qquad set $y_v = \min\left(\frac{1}{k-1},\frac{d}{(d-1)k+1}\right)$ for all $v\in h\setminus \{w\}$
     \State \qquad set $y_w = \max\left(0,\frac{d-k+1}{(d-1)k+1}\right)$
\State \Return $\Match$
    \end{algorithmic}
    \end{algorithm}

We prove \cref{thm:integral_bounded_degree}, restated below.

\begin{theorem}
    Algorithm \ref{bounded_degree_algo} is $\rho$-competitive for $k$-uniform hypergraphs whose online nodes have degree at most $d$, where
    \begin{align*}
        \rho = \min\left(\frac1{k-1}, \frac{d}{(d-1)k + 1}\right).
    \end{align*}
\end{theorem}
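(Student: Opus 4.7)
I would use a (randomized) primal-dual analysis with the dual variables $y_v$ produced by the algorithm. The goal is to establish the two standard ingredients: (i) $\sum_{v} y_v = |\Match|$ pointwise, and (ii) $\Exp\!\left[\sum_{v\in h} y_v\right] \geq \rho$ for every hyperedge $h\in H$. From these, $\Exp[y]/\rho$ is a feasible dual in expectation, and LP duality gives $\Exp[|\Match|] = \Exp\!\left[\sum_v y_v\right] \geq \rho\cdot\optLP(\Hyp) \geq \rho\cdot\opt(\Hyp)$.

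Verification of (i) is a direct computation. Every time the algorithm adds an edge $h=S\cup\{w\}$ to $\Match$, it increases $\sum_v y_v$ by $(k-1)\rho + \max(0,(d-k+1)/((d-1)k+1))$. Splitting into the regimes $d\leq k-1$ (where $\rho = 1/(k-1)$ and the online contribution vanishes) and $d\geq k-1$ (where $\rho = d/((d-1)k+1)$), one checks in both cases that this quantity equals exactly $1$, so the running totals always agree.

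The heart of the proof is (ii). Fix a hyperedge $h = \{v_1,\ldots,v_{k-1},w\}$ and condition on the state of $\Match$ immediately before $w$ arrives. Let $\mathcal{B}$ be the event that some $v_i$ is already covered by $\Match$ at that moment. On $\mathcal{B}$, that $v_i$ has $y_{v_i}=\rho$ at termination, so $\sum_{v\in h}y_v\geq \rho$ deterministically. On $\overline{\mathcal{B}}$, the edge $h$ lies in the nonempty set $A$ of edges in $\delta(w)$ disjoint from $\Match$, and $|A|\leq d$. The algorithm picks $h'\in A$ uniformly at random, so $w$ is matched and contributes $y_w = \max(0,(d-k+1)/((d-1)k+1))$, while each $v_i$ contributes $\rho$ whenever $v_i\in h'$, which happens with probability $|\{h''\in A: v_i\in h''\}|/|A|$. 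Using only the incidences from $h''=h$ itself, $\sum_{i=1}^{k-1}\Pr[v_i\in h'\mid \overline{\mathcal{B}},A] \geq (k-1)/|A|\geq (k-1)/d$, and hence
\[\Exp\!\left[\sum_{v\in h}y_v \,\Big|\, \overline{\mathcal{B}}, A\right] \geq \rho\cdot\frac{k-1}{d} + \max\!\left(0,\frac{d-k+1}{(d-1)k+1}\right).\]
The same two-case computation as in (i) shows this right-hand side equals $1/d \geq \rho$ when $d\leq k-1$, and equals exactly $\rho$ when $d\geq k-1$. Averaging over $A$ and combining with the deterministic bound on $\mathcal{B}$ yields (ii).

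The main obstacle is handling $\overline{\mathcal{B}}$ without appealing to arrivals after $w$ to salvage $h$: the only leverage is the uniform random choice within $A$ and the degree bound $|A|\leq d$, and the dual value $y_w$ assigned to the online node is precisely calibrated so that the two regimes for $\rho$ meet at the claimed bound. The competitive ratio then follows from (i)--(ii) and weak LP duality.
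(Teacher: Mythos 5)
Your proposal is correct and follows essentially the same route as the paper: a randomized primal--dual analysis where the dual increase per matched hyperedge sums to exactly $1$ in both regimes, combined with conditioning on whether some offline node of $h$ is already matched when $w$ arrives, and in the unmatched case using the uniform choice among at most $d$ available hyperedges to get $\rho\cdot\frac{k-1}{d}+\max\bigl(0,\frac{d-k+1}{(d-1)k+1}\bigr)\geq\rho$. Your explicit two-case verification of this last inequality matches the intended calculation, so there is nothing to add.
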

\begin{proof}
Let the algorithm be denoted by $\mathcal{A}$. We prove the result via a primal-dual analysis, where the random primal solution is given by $x_h := \mathbbm{1}_{\{h \in \Match\}}$ for every $h \in H$ and the random dual solution is the vector $y \in [0,1]^{V\cup W}$ constructed during the execution of the algorithm. Observe that the objective values of both solutions are equal at all times during the execution of the algorithm:
\begin{align}
\label{eq:primal_dual_eq}
\val(\mathcal{A}) = |\Match| = \sum_{h \in H}x_h = \sum_{v \in V \cup W} y_v.
\end{align}
This holds since every time a hyperedge $h \in H$ is matched by the algorithm, increasing the primal value $\val(\mathcal{A})$ by one, the dual objective increases by $\sum_{v \in h} y_v$. Two easy computations that we omit show that the latter is also equal to one in both cases where $d \leq k-1$ and $d \geq k-1$.

    We will now show that, in expectation, the dual constraints are satisfied up to a factor of $\rho$, i.e. 
    \begin{align}
        \Exp\left[\sum_{v\in h}y_v\right] \geq \rho \qquad \forall h \in H. \label{eq:d_bound_dual}
    \end{align}
This will imply the theorem, since the random vector $\Exp[y]/\rho$ will then be a feasible dual solution, leading to $\Exp[\val(\mathcal{A})] = \Exp\left[\sum_{v\in V \cup W}y_v\right] \geq \rho \: \optLP$ by \eqref{eq:primal_dual_eq} and \eqref{eq:d_bound_dual}.

    To show this inequality, let $h \in H$ be an arbitrary hyperedge incident to some online node $w \in W$. We now consider the following probabilistic event upon the arrival of $w$:
\[\mathcal{E} := \Big\{ \exists v \in h \setminus \{w\} \text{ which is already matched at the arrival of $w$}\Big\}.\]
 We will show \eqref{eq:d_bound_dual} by conditioning on $\mathcal{E}$ and on its complementary event $\bar{\mathcal{E}}$, which states that all nodes in $h \setminus \{w\}$ are unmatched when $w$ arrives, and that the hyperedge $h$ is thus available and considered in the random choice of the algorithm in this step. In the first case, if $\mathcal{E}$ happens, then some offline node $u \in h\setminus \{w\}$ has already had its dual value set to $y_u = \min\left(\frac{1}{k-1},\frac{d}{(d-1)k+1}\right) = \rho$ in a previous step of the algorithm, leading to
 \[ \Exp\left[\sum_{v\in h} y_v \: \Big| \: \mathcal{E} \right] = \sum_{v\in h} \Exp\left[y_v \mid \mathcal{E} \right] \geq \Exp\big[y_u \mid \mathcal{E}\big] = \rho. \]
Otherwise, if $\bar{\mathcal{E}}$ happens, we know that with probability at least $1/d$, the algorithm adds $h$ to the matching. Summing the dual values of the offline nodes contained in $h$ gives
    \begin{align*}
        \sum_{v\in h\setminus \{w\}}\Exp\left[y_v\mid \bar{\mathcal{E}}\:\right] &\geq \frac{1}{d}\cdot (k-1) \cdot \min\left(\frac{1}{k-1},\frac{d}{(d-1)k+1}\right).
    \end{align*}
    Furthermore, since the algorithm will always match $w$ to a hyperedge in this case, we have:
    \begin{align*}
        \Exp\left[y_w\mid \bar{\mathcal{E}} \: \right] &=  \max\left(0,\frac{d-k+1}{(d-1)k+1}\right).
    \end{align*}
    Adding those terms together, we get:
    \begin{align*}
        \sum_{v\in h}\Exp\left[y_v\mid \bar{\mathcal{E}} \: \right] &\geq \frac{1}{d}\cdot (k-1) \cdot \min\left(\frac{1}{k-1},\frac{d}{(d-1)k+1}\right) + \max\left(0,\frac{d-k+1}{(d-1)k+1}\right)\\
        &= \min\left(\frac{1}{d},\frac{k-1}{(d-1)k+1}\right) + \max\left(0,\frac{d-(k-1)}{(d-1)k+1}\right)\\
        &\geq \min\left(\frac{1}{d},\frac{k-1}{(d-1)k+1}+ \frac{d-(k-1)}{(d-1)k+1} \right) \geq \rho.
    \end{align*}
    This shows that \eqref{eq:d_bound_dual} holds, and hence proves that the algorithm is $\rho$-competitive.
\end{proof}

\section{Concluding remarks}
Our main contribution was a tight result for the online fractional matching problem under vertex arrivals for $3$-uniform hypergraphs, with matching upper and lower bounds of $(e-1)/(e+1) \approx 0.46$. The biggest open question raised by this work is the integral setting, where the best known algorithm is still the $1/3$-competitive simple greedy algorithm. Possible directions to get an improvement would be to develop new rounding techniques of fractional algorithms on hypergraphs, or by trying to find an analogue of the randomized \Call{Ranking}{} algorithm for bipartite graphs. This hypergraph setting adds additional difficulties to the online matching problem, for which new techniques seem to be required. Another interesting direction would be to study this problem in different online arrival models, for instance edge-arrivals.

For $k$-uniform hypergraphs with $k \geq 3$ under vertex arrivals, the integral setting admits upper and lower bounds of $1/k$ and $2/k$, whereas the fractional setting has upper and lower bounds of $\Theta(1/\log k)$. It would still be interesting to derive the optimal closed form solution depending on $k$ for both of these settings, under either vertex or edge arrivals. Note that for large $k$, both models are very similar, as there is a simple reduction from $k$-uniform hypergraphs under edge-arrivals to $(k+1)$-uniform hypergraphs under vertex-arrivals by appending degree one online nodes.

For both the integral and fractional settings, an additional direction would be to study the variant in which previously added hyperedges can be removed from the matching. This is a natural question, as both our upper bound constructions for the fractional and integral settings rely on the fact that an algorithm cannot revoke previous decisions.

\bibliographystyle{plain}
\bibliography{references}

\appendix
\section{Proof of Lemma \ref{lemma_vertex_arrival}}
\label{app:proofs_upper_bound}
\begin{proof}
Let us fix a fractional algorithm $\mathcal{A}$ and let us fix a matching $\Match = (V, E)$ of size $n$, meaning that $|E|=n$.
The adversarial online 3-uniform hypergraph instance $\mathcal{H}$ consists of $n$ online nodes $W = \{w_1, \dots ,w_n\}$ arriving and connecting to a subset of edges of the matching $\Match$.
For every $w_i$, we denote by $E(w_i) \subseteq E$ the edges of the matching the online node $w_i$ is connected to, meaning that the 3-hyperedges incident to $w_i$ are $\delta(w_i) = \{w_i \cup e : e \in E(w_i)\}$.
Let us denote by $x \in \mathbb{R}^E$ the fractional solution generated online by algorithm $\mathcal{A}$, and note that this is in fact the induced fractional solution on $E$.
\begin{enumerate}
\item The first online node $w_1$ connects to every edge of the matching, i.e. $E(w_1) = E$. The algorithm $\mathcal{A}$ now assigns fractional value $x(e)$ to every edge $e \in E$, and we denote by $e_1 \in E$ the edge with the lowest fractional value $x(e_1)$. Observe that $x(e_1) \leq 1/n$.
\item The second online node $w_2$ connects to $E(w_2) = E \setminus \{e_1\}$. The algorithm $\mathcal{A}$ can thus increase the fractional values $x(e)$ for every $e \in E(w_2)$. We then denote by $e_2$ the edge in $E(w_2)$ with the lowest fractional value after this iteration, and it is easy to check that $x(e_1) + x(e_2) \leq 2/n + 1/(n-1)$.
\item More generally, for every $k \in \{1, \dots, n\}$, the online node $w_k$ connects to $n-k+1$ edges $E(w_k) = E \setminus \{e_1, \dots, e_{k-1}\}$, and $e_k$ is defined as the edge having the lowest fractional value at the end of the iteration of $w_k$. We thus get a bound of
\begin{equation}
\label{eq_bound_vertex_arrival}
\sum_{k=1}^{\ell} x(e_k) \leq \sum_{k=1}^{\ell} \sum_{i = 1}^{k} \frac{1}{n-i+1} \qquad \forall \ell \in \{1, \dots, n\}.
\end{equation}
\end{enumerate}
The inner sum in \eqref{eq_bound_vertex_arrival} reaches $\Delta$ approximately when $k \approx (1 - e^{-\Delta})n$. For higher values of $k$, it is thus better to use the bound $x(e_k) \leq \Delta$, which holds by assumption. By defining $p := \lfloor{e^{-\Delta}n}\rfloor$ and $q := n - p$, we can now compute a precise upper bound on the total value generated by the algorithm using \eqref{eq_bound_vertex_arrival}:
\begin{align}
\label{eq_value_up_bound}
\val(A, \mathcal{H}) &= \sum_{k = 1}^{n}x(e_k) \leq \sum_{k=1}^{q}\sum_{i = 1}^{k} \frac{1}{n-i+1} + \sum_{k = q+1}^{n}\Delta = \sum_{i=1}^{q} \sum_{k = i}^{q} \frac{1}{n-i+1} + p \Delta \nonumber \\
&= \sum_{i = 1}^{q}\frac{q - i +1}{n-i+1} + p \Delta = q - (n-q)\sum_{i=1}^q \frac{1}{n-i+1} + p \Delta \nonumber \\
&= p\Delta + (n-p) - p \sum_{i = n-q+1}^{n}\frac{1}{i} =  p \Delta + n - p - p (H_n - H_p)
\end{align}
In order to get the desired result for every value of $n \geq 1$, we now need to tightly approximate the difference of the harmonic numbers $H_n - H_p$. In particular, the well known bounds $\ln(n) + 1/n \leq H_n \leq \ln(n+1)$ for every $n \in \mathbb{N}$ are not enough in this case. We use the equality
\begin{equation}
\label{eq_Hn_approx}
H_n = \ln(n) + \gamma + \epsilon(n) \qquad \text{for some } 0 < \epsilon(n) < \frac{1}{2n}
\end{equation}
where $\gamma = \lim_{n \to \infty} (H_n - \ln(n)) \approx 0.58$ is Euler's constant. Moreover, recall that 
\begin{equation}
\label{eq_p_approx}
e^{-\Delta}n - 1 \leq p \leq e^{-\Delta}n.
\end{equation}
Using \eqref{eq_Hn_approx} and \eqref{eq_p_approx} together gives:
\begin{align}
\label{eq_diff_harmonic_nb}
H_n - H_p &= \ln\left(\frac{n}{p}\right) + \epsilon(n) - \epsilon(p) \geq \ln\left(\frac{n}{e^{-\Delta}n}\right) - \frac{1}{2p} = \Delta - \frac{1}{2p} \nonumber \\
\end{align}
Finally, plugging \eqref{eq_p_approx} and \eqref{eq_diff_harmonic_nb} into \eqref{eq_value_up_bound} gets us the desired result for every value of $n \in \mathbb{N}:$
\begin{align*}
\val(A, \mathcal{H}) &\leq p \Delta + n - p - p \left(\Delta - \frac{1}{2p}\right) = n - p + \frac{1}{2} \leq (1 - e^{-\Delta})n + \frac{3}{2}.
\end{align*}
\end{proof}

\section{Justification of assumptions in \Cref{subsub:analysis_overview}}
\label{sec:assumptions}

In this section, we justify the two assumptions made on the algorithm in \cref{subsub:analysis_overview}.

\subsection{Assumption 1: Symmetry} 
We start by justifying the symmetry assumption.
For a vertex-arrival hypergraph $\mathcal{H}$, we denote $V(\mathcal{H})$ as the set of offline nodes, $W(\mathcal{H})$ as the set of online nodes, and $H(\mathcal{H})$ as the set of hyperedges.

\begin{definition}
For a vertex-arrival hypergraph $\mathcal{H} = (V,W,H)$, an \emph{automorphism} is a permutation $\sigma$ of the offline nodes $V$ such that for every $S\subseteq V$ and $w\in W$, 
\[S\cup\{w\}\in H \iff \{\sigma(v):v\in S\}\cup\{w\}\in H.\]
\end{definition}
For a subset $S\subseteq V$, we write $\sigma(S) \coloneqq \{\sigma(v):v\in S\}$.
For a hyperedge $h = S\cup\{w\}$ where $S\subseteq V$ and $w\in W$, we also write $\sigma(h) \coloneqq \sigma(S)\cup \{w\}$.

\begin{definition}
    For a vertex-arrival hypergraph $\mathcal{H} = (V,W,H)$, let $\Sigma$ be a subset of its automorphisms. 
We say that a fractional matching $x$ of $\mathcal{H}$ is \emph{$\Sigma$-symmetric} if $x_h=x_{\sigma(h)}$ for all $h\in H$ and $\sigma\in \Sigma$.
\end{definition}

Since the construction of our vertex-arrival instance depends on the actions of the algorithm, we will overload the notation $\mathcal{H}$ as follows.
A \emph{(adaptive) vertex-arrival instance} is a function $\mathcal{H}$ which takes as input an algorithm $\Alg$ and outputs a vertex-arrival hypergraph $\mathcal{H}(\Alg)$.
For concreteness, the reader can think of $\mathcal{H}$ as the instance constructed in \Cref{sec:upper_bound}.
For $i\geq 1$, let $\mathcal{H}_i$ be the subinstance of $\mathcal{H}$ that ends with the arrival of the $i$th online node $w_i$.
We can assume that $\mathcal{H}_1(\Alg) = \mathcal{H}_1(\Alg')$ for any pair of algorithms $\Alg$ and $\Alg'$, as algorithms do not take any actions before the first online node arrives.
So, the subinstance $\mathcal{H}_1$ can be thought of as a hypergraph.

\begin{definition}
For a vertex-arrival instance $\mathcal{H}$ and an algorithm $\Alg$, let $\Sigma$ be a subset of automorphisms of $\mathcal{H}(\Alg)$.
We say that $\mathcal{A}$ is \emph{$\Sigma$-symmetric on $\mathcal{H}$} if it returns a $\Sigma$-symmetric fractional matching of $\mathcal{H}(\mathcal{A})$ given $\mathcal{H}$.
\end{definition}

\newcommand{\sym}{\operatorname{sym}}
  
Fix a vertex-arrival instance $\mathcal{H}$, and let $\Sigma$ be a subgroup of automorphisms of $\mathcal{H}_1$.
For an algorithm $\Alg$, we define its \emph{$\Sigma$-symmetrization} $\sym_\Sigma(\Alg)$ as follows.
We first run $\Alg$ on $\mathcal{H}_1$ to obtain a fractional matching $x'$.
Then, $\sym_\Sigma(\Alg)$ sets
\begin{align*}
        x_h:=\frac{1}{|\Sigma|}\sum_{\sigma\in \Sigma}x'_{\sigma(h)} \qquad \forall\; h\in H(\mathcal{H}_1).
\end{align*}
This in turn determines the next set of hyperedges in the hypergraph $\mathcal{H}_2(\sym_\Sigma(\Alg))$.
Note that $\mathcal{H}_2(\sym_\Sigma(\Alg))$ may not be equal to $\mathcal{H}_2(\Alg)$.
As long as $\Sigma$ remains a subgroup of automorphisms of $\mathcal{H}_2$, we repeat the process above.
So, for any $i\geq 2$, if $x'$ is the fractional matching obtained by running $\mathcal{A}$ on $\mathcal{H}_i(\sym_\Sigma(\Alg))$, then $\sym_\Sigma(\Alg)$ sets
\begin{align*}
        x_h:=\frac{1}{|\Sigma|}\sum_{\sigma\in \Sigma}x'_{\sigma(h)} \qquad \forall\; h\in H(\mathcal{H}_i(\sym_\Sigma(\Alg))).
\end{align*}
Note that $\sym_\Sigma(\Alg)$ is well-defined if and only if $\Sigma$ is a subgroup of automorphisms of the full hypergraph $\mathcal{H}(\sym_\Sigma(\Alg))$.

\begin{lemma}\label{lem:group}
If $\sym_\Sigma(\Alg)$ is well-defined, then it is $\Sigma$-symmetric on $\mathcal{H}$.
\end{lemma}

\begin{proof} 
Let $x$ be the fractional matching obtained by running $\sym_\Sigma(\Alg)$ on $\mathcal{H}(\sym_\Sigma(\Alg))$, and let $x'$ be the fractional matching obtained by running $\Alg$ on $\mathcal{H}(\sym_\Sigma(\Alg))$.
For any hyperedge $h$ in $\mathcal{H}(\sym_\Sigma(\mathcal{A}))$ and any permutation $\tau\in \Sigma$, we have
\[x_h = \frac{1}{|\Sigma|}\sum_{\sigma\in \Sigma}x'_{\sigma(h)} =  \frac{1}{|\Sigma|}\sum_{\sigma\in \Sigma}x'_{\sigma(\tau(h))} =  x_{\tau(h)},\]
where the second equality is due to $\Sigma$ being a group.
\end{proof}

The following lemma shows that in order to construct a worst-case instance for all algorithms, it suffices to construct a worst-case instance $\mathcal{H}$ for algorithms that are $\Sigma$-symmetric on $\mathcal{H}$.
That is, $\mathcal{H}$ can be extended to an instance $\mathcal{H}'$ for all algorithms on which every algorithm has the same performance as its $\Sigma$-symmetrization.

\begin{lemma}\label{lem:symmetric_algorithm}
    For a vertex-arrival instance $\mathcal{H}$, let $\Sigma$ be a subgroup of automorphisms of $\mathcal{H}_1$.
    Suppose that $\Sigma$ remains a subgroup of automorphisms of $\mathcal{H}_{i}(\Alg)$ for any $i\geq 2$ and any algorithm $\Alg$ that is $\Sigma$-symmetric on $\mathcal{H}_{i-1}$. 
    Then, there exists a vertex-arrival instance $\mathcal{H}'$ such that for every algorithm $\Alg$:
    \begin{itemize}
        \item  $\mathcal{H}'(\Alg) = \mathcal{H}(\sym_\Sigma(\Alg))$,
        \item $\val(\Alg,\mathcal{H}')= \val(\sym_\Sigma(\Alg),\mathcal{H})$.
    \end{itemize}
\end{lemma}

\begin{proof}
    Let $\Alg$ be any algorithm.
    Observe that $\sym_\Sigma(\Alg)$ is well-defined due to our assumption.
    Hence, $\sym_\Sigma(\Alg)$ is $\Sigma$-symmetric on $\mathcal{H}$ by \Cref{lem:group},. 
    We define $\mathcal{H}'(\Alg):=\mathcal{H}(\sym_\Sigma(\Alg))$.
    Let $x$ be the fractional matching obtained by running $\sym_\Sigma(\Alg)$ on $\mathcal{H}(\sym_\Sigma(\Alg))$, and let $x'$ be the fractional matching obtained by running $\Alg$ on $\mathcal{H}'(\Alg)$. 
    By definition, we have $x_h:=\frac{1}{|\Sigma|}\sum_{\sigma\in \Sigma}x'_{\sigma(h)}$ for every hyperedge $h\in H(\mathcal{H}(\sym_\Sigma(\Alg)))$. 
    Hence, 
    \[\sum_{h} x_h = \sum_{h}\frac{1}{|\Sigma|}\sum_{\sigma\in \Sigma}x'_{\sigma(h)} = \frac{1}{|\Sigma|} \sum_{\sigma\in \Sigma} \sum_{h}x'_{\sigma(h)} = \sum_{h}x'_{h}. \] \todo{CK: Added the 3rd term in the chain of equations.}
\end{proof}

In \cref{subsub:analysis_overview}, we assumed that the algorithm treats the $k$th vertex in $U_i$, say $u_{i,k}$, and the $k$th vertex in $V_i$, say $v_{i,k}$, symmetrically.
If our constructed hypergraph $\Hyp$ was symmetric with respect to these vertices, i.e. if the permutation $\sigma$ swapping $u_{i,k}$ and $v_{i,k}$ for all $i$ and $k$ was an automorphism of $\Hyp$, then \cref{lem:symmetric_algorithm} would show that this assumption can be made without loss of generality.
In particular, using the subgroup $\Sigma = (\{\sigma,e\},\circ)$ where $e$ is the identity permutation, it shows that $\mathcal{H}$ can be extended to all algorithms $\Alg$ so that $\Alg$ and $\sym_\Sigma(\Alg)$ have the same performance. \todo{CK: Added more expanation.}

However, one part of the instance that breaks this symmetry is the construction given in the proof of \cref{lemma_vertex_arrival} and illustrated in Figure \ref{fig_vertex_arrival}. As a reminder, this construction is repeatedly applied to submatchings of $\mathcal{M}^{(t)}$ in \cref{sec:firstphases}. Let us fix one such submatching and denote it by $\mathcal{M}:=\mathcal{M}^{(t)}(i,j)$.
As described in \cref{sec:lastphase} and illustrated in Figure \ref{fig_edge_arr_w_threshold}, if some $u_{i,k} \in \mathcal{M}$, then $v_{i,k} \in \mathcal{M}$ and the submatching is symmetric with respect to this pair, i.e. $\sigma(e) \in \Match$ for every edge $e$ in $\Match$. However, due to the Lemma \ref{lemma_vertex_arrival} construction, $e\cup \{w\}$ might be a hyperedge in $\Hyp$ for some online vertex $w$, while $\sigma(e\cup \{w\})=\sigma(e)\cup \{w\}$ might not be a hyperedge in $\Hyp$.

To fix this, the construction can be slightly tweaked in the following way. An important observation is that the horizontal edges in $\Match$ (between $u_{i,k}$ and $v_{i,k}$) are not isomorphic to any other edge in the hypergraph, whereas each of the diagonal edges (non-horizontal edges) are isomorphic to exactly one other edge in $\Match$.
For this reason, we can first apply the Lemma \ref{lemma_vertex_arrival} construction on just the horizontal edges of $\Match$.

We can then apply a slightly modified construction to the diagonal edges, where the pairs of isomorphic edges are treated in the same way. In the original construction, a newly arriving online vertex $w$ would be connected to all edges in $\Match$ that were incident to the previous online vertex, except for the one with the smallest fractional value. In the modified construction, we instead consider the online vertices in groups of two. For every two consecutive online vertices, we connect them to all edges in $\Match$ that were incident to the previous online vertex, except for the diagonal pair with the smallest total fractional value. This ensures that the symmetry between the diagonal edges is respected.

This modified construction would slightly worsen the upper bound in \cref{lemma_vertex_arrival}.
Let $\Match_{\text{hor}}$ be the set of horizontal edges in $\Match$ and $\Match_{\text{diag}}$ be the set of diagonal edges in $\Match$. By applying \cref{lemma_vertex_arrival} to the horizontal edges, we get that the value of the matching is at most $(1 - e^{-\Delta})|\Match_{\text{hor}}| + 3/2$. The value of the diagonal edges is at most twice the value of the original construction from \cref{lemma_vertex_arrival} applied to a transversal of the pairs of diagonal edges, which is at most $(1 - e^{-\Delta})\cdot \frac{1}{2}|\Match_{\text{diag}}| + 3/2$ by \cref{lemma_vertex_arrival}. So the total value of the matching is at most:
\begin{align*}
    (1 - e^{-\Delta})|\Match_{\text{hor}}| + 3/2 + 2\cdot \left((1 - e^{-\Delta})\cdot \frac{1}{2}|\Match_{\text{diag}}| + 3/2\right)=(1 - e^{-\Delta})|\Match| + 9/2.
\end{align*}

This results in a constant of $9/2$ instead of $3/2$ in \cref{lemma_vertex_arrival},  and a constant of $42$ instead of $15$ in \cref{lemma_vertex_arrival_blocks,lem:value_per_phase,cor_total_value_first_phases}. This does not affect the asymptotic upper bound for large $m$. Hence, it shows that \cref{thm:fractional} also holds for non-symmetric algorithms.

\subsection{Assumption 2: There is an optimal \texorpdfstring{$\varepsilon$}{ε}-threshold respecting algorithm}
Next, we justify that we restrict to $\varepsilon$-threshold respecting algorithms in our proof.
Let $\mathcal{H}$ be the instance constructed in \Cref{subsec:construction_overview}. 
Let $f$ be the function given by
\[f(x) \coloneqq \frac{e^x}{e+1},\]
and recall the definition of $\varepsilon$-threshold respecting with respect to $f$ (\cref{def_thresh_resp}). 

For any algorithm $\Alg$ and $\varepsilon>0$, we now show that there exists an algorithm $\Alg'$ which is $\varepsilon$-threshold respecting on all online nodes before the last phase.
Moreover, there exists an instance $\Hyp'$ such that the performance of $\Alg$ on $\Hyp'$ matches the performance of $\Alg'$ on $\Hyp$.

\begin{lemma}\label{lem:threshold_respecting}
Let $\mathcal{H}$ be the instance constructed in \Cref{subsec:construction_overview}.
For any algorithm $\Alg$ and $\varepsilon>0$, there exists an algorithm $\Alg'$ which is $\varepsilon$-threshold respecting on all online nodes before the last phase.
Furthermore, there exists an instance $\Hyp'$ such that
\[\frac{\val(\Alg,\mathcal{H}'(\Alg))}{\opt(\Hyp'(\Alg))} = \frac{\val(\Alg',\Hyp(\Alg'))}{\opt(\Hyp(\Alg'))}.\]
\end{lemma}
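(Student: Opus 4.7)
The plan is to construct $\Alg'$ and $\Hyp'$ simultaneously via a fine-grained simulation: each online arrival of $\Hyp(\Alg')$ is decomposed into many micro-arrivals in $\Hyp'(\Alg)$, and $\Alg'$'s action is obtained by aggregating $\Alg$'s responses on these micro-copies. The key observation is that if each micro-step forces $\Alg$ to deposit only a small amount of mass, then the $1$-Lipschitz property of $\hfunc$ on $[0,1]$ makes each priority $\phi$ change slowly, so $\varepsilon$-threshold respectingness can be enforced essentially for free.

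Fix an integer $K\geq \lceil 2/\varepsilon \rceil$ and build the pair $(\Alg',\Hyp')$ by joint induction on the online arrivals. Whenever $\Hyp(\Alg')$ presents its next online vertex $w$ with incident hyperedges $\delta(w)$, $\Hyp'(\Alg)$ instead presents $K$ micro-copies $w^{(1)},\dots,w^{(K)}$ in sequence, each having the same offline neighbourhood as $w$ but capped so that $\Alg$ can deposit total mass at most $1/K$ on $w^{(i)}$; the cap can be enforced by a small gadget (for instance, by attaching each $w^{(i)}$ to a fresh auxiliary offline vertex with load budget $1/K$ that sits in every one of its hyperedges), which leaves the offline optimum unchanged. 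The algorithm $\Alg'$ then aggregates: whatever mass $\Alg$ places on the copy of hyperedge $h$ through $w^{(i)}$ is added to $x_h$ in $\Hyp(\Alg')$. After the $K$ micro-steps the total mass on $\delta(w)$ is at most one, so $\Alg'$ satisfies the online-vertex constraint.

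The $\varepsilon$-threshold respecting property is now forced. In any micro-step, the load $x(\delta(v))$ at each offline node $v$ increases by at most $1/K$, so by the $1$-Lipschitzness of $\hfunc$ each priority $\phi(h)=\hfunc(x(\delta(u)))+\hfunc(x(\delta(v)))$ grows by at most $2/K \leq \varepsilon$ per micro-step. By further restricting the hyperedges offered to later copies $w^{(j)}$ of the same $w$ to those with current $\phi(h)\leq 1$, $\Alg'$ never adds mass to an edge after $\phi$ has already crossed one; combined with the Lipschitz bound this gives $\phi(h)\leq 1+\varepsilon$ at the end of the iteration for every $h\in\delta(w)$ with $x_h>0$. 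Equality of the ratios then follows immediately from the coupling: $\val(\Alg,\Hyp'(\Alg))=\val(\Alg',\Hyp(\Alg'))$ because each bit of mass placed by $\Alg$ is replayed verbatim by $\Alg'$, while $\opt(\Hyp'(\Alg))=\opt(\Hyp(\Alg'))$ because the underlying phase-and-component structure of $\Hyp$ is carried over to $\Hyp'$ and the auxiliary gadgets contribute matched pairs that cancel between numerator and denominator.

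The main obstacle is precisely the engineering of the micro-step gadgets: one needs to cap $\Alg$'s per-step contribution at $1/K$ without disturbing the adaptive adversarial structure of \cref{subsec:construction_overview} and without changing $\opt$. In particular, the adversary in $\Hyp'(\Alg)$ must reproduce the same high-level decisions (which matchings $\Match_i^{(t)}$ to play, which online nodes to build next) that the adversary in $\Hyp(\Alg')$ would make, even though the two depend on two different algorithms running on two different instances; the coupling through aggregated mass makes this consistent but has to be checked phase by phase, and the $\varepsilon$ slack in \cref{def_thresh_resp} is what absorbs the discretization error accumulated along the way.
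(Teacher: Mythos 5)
There is a genuine gap, and it sits exactly where you flag the ``main obstacle'': the $1/K$ cap gadget does not exist in this model. Every vertex in the matching LP has capacity $1$, not $1/K$, so ``a fresh auxiliary offline vertex with load budget $1/K$'' is not a legal object; putting such a vertex into every hyperedge of a micro-copy would also make those hyperedges $4$-uniform, and the adversary cannot instead pre-load an auxiliary capacity-$1$ vertex to $1-1/K$, since it cannot force $\Alg$ to fill it. Without the cap, a single micro-copy can deposit a full unit of mass, so offline loads (and hence priorities, via the Lipschitz bound) can jump by far more than $\varepsilon$ in one micro-step, and the aggregated solution can even violate the online constraint $x(\delta(w))\leq 1$ in $\Hyp$: in the construction of \cref{lemma_vertex_arrival} an online node is adjacent to many \emph{disjoint} matching edges, so $K$ uncapped micro-copies could jointly place total mass well above $1$ on $\delta(w)$. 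The paper's proof of \cref{lem:threshold_respecting} avoids this by a different mechanism that you are missing: it replicates each \emph{offline} node into $N=\lceil 2/\varepsilon\rceil$ copies, lets the $j$th copy of $w$ connect only to the $j$th offline copies, and defines $\Alg'$'s solution as the \emph{average} $x_{i,j}(h)=\frac1N\sum_k x'_{i,j}(h'_k)$. Then each online copy, even though it may place a full unit in $\Hyp'$, moves every averaged load by at most $1/N$, which is what makes both feasibility and the $2/N\leq\varepsilon$ priority bound go through. The price is that $\val$ and $\opt$ each scale by $N$ rather than being preserved, which is why the lemma asserts equality of ratios, not of values; your claim that the gadgets ``leave the offline optimum unchanged'' has no proof and is tied to the broken cap.

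A secondary flaw: even granting a cap, your rule of merely restricting later micro-copies to hyperedges with current $\phi(h)\leq 1$ does not yield $\varepsilon$-threshold respectingness. An edge $h=\{u,v,w\}$ that received mass in an early micro-step can have $\phi(h)$ driven far above $1+\varepsilon$ by \emph{later} micro-steps of the same $w$ that load $u$ or $v$ through other incident hyperedges; \cref{def_thresh_resp} is evaluated at the end of $w$'s iteration. The paper instead uses a global stopping rule: as soon as any $h\in\delta(w_i)$ has averaged priority at least $1$, no further copies of $w_i$ arrive, so in the final micro-step every priority increases by at most $2/N$ and all priorities end below $1+\varepsilon$. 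You would need the same global cutoff (plus the offline replication and averaging) to repair the argument.
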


\begin{proof}
From $\Hyp$, we construct a new instance $\Hyp'$ as follows.
Let $N = \lceil 2/\varepsilon \rceil$. 
For every offline node $v$ in $\Hyp$, create $N$ offline copies in $\Hyp'$, denoted $v'_1,v'_2,\dots,v'_N$.
The new algorithm $\Alg'$ will be defined based on the behaviour of $\Alg$ on $\Hyp'$.
When the $i$th online node $w_i$ arrives in $\Hyp(\Alg')$ for $i\geq 1$, at most $N$ copies of $w_i$ arrives sequentially in $\Hyp'(\Alg)$, denoted $w'_{i,1},w'_{i,2},\dots$.
When the $j$th copy $w'_{i,j}$ arrives, for every edge $h = S\cup w_i$ in $\Hyp_i(\Alg')$, add the edge $h'_j \coloneqq \{v'_j:v\in S\}\cup w'_{i,j}$ to $\Hyp'(\Alg)$.
Now, let $x'_{i,j}$ denote the solution given by $\Alg$ in $\Hyp'$ after the arrival of $w'_{i,j}$.
Consider the following averaged solution
\[x_{i,j}(h) \coloneqq \frac{1}{N} \sum_k x'_{i,j}(h'_k) \qquad \qquad \forall h\in H(\Hyp_i(\Alg')).\]
If $j=N$, or $w_i$ appeared before the last phase and there exists a hyperedge $h\in \delta(w_i)$ such that
\[\sum_{v\in h\setminus \{w_i\}} f(x_{i,j}(\delta(v))) \geq 1,\]
then $w'_{i,j+1},\dots,w'_{i,N}$ will not arrive in $\Hyp'$.
In this case, $\Alg'$ sets $x(h) \gets x_{i,j}(h)$ for all $h\in \delta(w_i)$ in $\Hyp$.
Otherwise, we proceed to let the $(j+1)$th copy $w'_{i,j+1}$ arrive in $\Hyp'$.
This completes the description of $\Alg'$.

Clearly, $x$ is a fractional matching in $\Hyp(\Alg')$. Moreover,
\[\val(\Alg',\mathcal{H}(\Alg')) = \sum_h x(h) = \frac{\val(\Alg,\mathcal{H}'(\Alg))}{N}.\]
Next, we claim that $N \cdot \opt(\Hyp(\Alg')) = \opt(\Hyp'(\Alg))$.
Based on the construction of $\Hyp$, the offline optimal matching in $\Hyp(\Alg')$ covers all the online nodes in the last phase, and the online nodes on which $\Alg'$ is strictly threshold respecting. Let $W_\opt$ denote the union of these two sets.
For each $w_i\in W_\opt$, observe that $w_{i,j}$ is present in $\Hyp'(\Alg)$ for all $j\in [N]$ by our construction of $\Hyp'$.
Hence, the offline optimal matching in $\Hyp'(\Alg)$ covers the following online nodes
\[\{w_{i,j}: w_i\in W_\opt, j\in [N]\}.\]
So, $N \cdot \opt(\Hyp(\Alg')) = \opt(\Hyp'(\Alg))$ as desired.

It is left to show that $\Alg'$ is $\varepsilon$-threshold respecting on all online nodes before the last phase.
Pick such an online node $w_i$ and let $w_{i,j}$ be its last copy in $\mathcal{H'}(\Alg)$.
Note that $x_{i,j}$ is the output of $\Alg'$ in $\Hyp_i(\Alg')$.
For any $h\in \delta(w_i)$, we have
\begin{align*}
\sum_{v\in h\setminus\{w_i\}}f(x_{i,j}(\delta(v))) &\leq \sum_{v\in h\setminus\{w_i\}} f\left(x_{i,j-1}(\delta(v))+\frac1N \right) \tag{$x'_{i,j}(\delta(w'_{i,j}))\leq 1$} \\
&\leq \sum_{v\in h\setminus\{w_i\}} \left( f(x_{i,j-1}(\delta(v))+\frac{1}{N}  \right) \tag{$f$ is 1-Lipschitz}\\
&< 1 + \frac{2}{N} \tag{due to $|h| = 3$ and the construction of $\Hyp'$}\\
&\leq 1+\varepsilon
\end{align*}
\end{proof}
\section{Properties of $\psi(t,y)$}
\label{app:properties_psi}
\begin{proof}[Proof of \cref{lemma_psi}]
    The first two statement follow directly from the definition and the symmetry of $B(t,\frac12)$ around $\frac{t}{2}$. For the second statement, let $X\sim B(t,\frac12)$ and $Y\sim B(1,\frac12)$ be independent. Then, we have:
    \begin{align*}
        \psi(t+1,y)&=\Pr\left[X+Y< \frac{t+1}{2}+y\right] + \frac12\Pr\left[X+Y= \frac{t+1}{2}+y\right]\\
        &=\Pr[Y=1]\Pr\left[X< \frac{t}{2}+y-\frac12\right] + \Pr[Y=0]\Pr\left[X< \frac{t}{2}+y+\frac12\right]\\
        &\quad + \frac12\Pr[Y=0]\Pr\left[X= \frac{t}{2}+y+\frac12\right] + \frac12\Pr[Y=1]\Pr\left[X= \frac{t}{2}+y-\frac12\right]\\
        &=\frac12\psi\left(t,y-\frac12\right) + \frac12\psi\left(t,y+\frac12\right).
    \end{align*}
    Now, let us prove the last statement. Let $X\sim B(t,\frac12)$ and observe that $1-\psi(t,y)\leq \Pr[X\geq \frac{t}{2} +y]$, leading to:
    \begin{align*}
        \sum_{y=0}^\infty (1-\psi(t,\frac12 y))&\leq \sum_{y=0}^\infty \Pr\left[X-\frac{t}{2}\geq \frac{y}{2}\right]
        \leq \sum_{y=0}^\infty \Pr\left[\left|X-\frac{t}{2}\right|\geq \frac{y}{2}\right]\\&= \sum_{y=0}^\infty (y+1)\Pr\left[\left|X-\frac{t}{2}\right|= \frac{y}{2}\right] =
        1+2\sum_{y=0}^\infty \frac{y}{2}\Pr\left[\left|X-\frac{t}{2}\right|= \frac{y}{2}\right] \\
        &= 1+2 \: \Exp\left[\left|X-\frac{t}{2}\right|\right]
        \leq 1 + 2\sqrt{\Exp\left[\left(X-\frac{t}{2}\right)^2\right]}\qquad \text{(by Jensen's inequality)}\\&=1 + 2\sqrt{\Var\left[X\right]}=1+\sqrt{t}.
    \end{align*}
\end{proof}
\section{Integral upper bound for $k$-uniform hypergraphs}
\label{sec:integral_hardness}
In this section, we prove a strong upper bound against any randomized integral algorithm, showing that the greedy algorithm is almost optimal, since it achieves a competitive ratio of $1/k$.

\begin{theorem} \label{thm:integral_hardness}
For the online matching problem on $k$-uniform hypergraphs, no randomized integral algorithm can be $2/k$-competitive.
\end{theorem}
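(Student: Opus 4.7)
The plan is to prove \cref{thm:integral_hardness} via Yao's minimax principle: it suffices to construct a distribution over instances on which every deterministic algorithm has expected competitive ratio at most $2/k$. This reduces the randomized hardness to an analysis of deterministic algorithms against a random input, following the same high-level strategy as \cite{trobst_almost_2024} for the weaker bound $(2+o(1))/k$.

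As a starting point, I would consider the following ``commitment gadget''. A single online vertex $w_0$ arrives with $k-1$ hyperedges, each containing all $k-1$ offline vertices from one of $k-1$ disjoint clusters of size $k-1$. After the algorithm's decision, reveal $k-1$ additional online vertices, each with a single hyperedge using one offline vertex from a uniformly randomly selected ``hidden'' cluster plus $k-2$ fresh offline vertices. If the algorithm committed to the same cluster as the hidden one (probability $1/(k-1)$), then the algorithm matches only $1$ hyperedge while $\opt = k$ (one match for $w_0$ on a different cluster, plus all $k-1$ additional matches). Otherwise, the algorithm matches all $k$ hyperedges and $\opt = k$ as well. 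This yields an expected ratio of $(k-1)/k$, which equals $2/k$ exactly at $k = 3$ but is loose for larger $k$.

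To obtain the tight $2/k$ bound for general $k$, the gadget must be iterated or nested. One approach is to compose $\Theta(k)$ levels of such gadgets with independent randomness at each level, forcing the algorithm to accumulate multiple ``wrong'' commitments in expectation. The key analytical step is an induction on the nesting depth showing that the expected value of any deterministic algorithm remains at most $2 \cdot \opt / k$, despite $\opt$ growing with the number of levels. Composing the gadgets over $\Theta(n)$ fresh copies also allows the bound to hold for arbitrarily large $\opt$, as required for a meaningful competitive ratio.

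The main obstacle is designing the nesting (or composition) so that the slack in the analysis does not accumulate across levels, obtaining the tight constant $2$ rather than $2 + o(1)$. This likely requires a careful probabilistic analysis combining Yao's principle with concentration arguments over the random hidden clusters at each level, in the spirit of \cite{trobst_almost_2024} but with more precise bookkeeping to eliminate the lower-order terms. An exchange-style argument, showing that no clever adaptive deterministic strategy can beat the symmetric one by more than $o(1/k)$ against the randomized distribution, is the crux of the tightness.
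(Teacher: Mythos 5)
You correctly identify Yao's principle as the opening move, and this matches the paper's first step. But the construction you actually specify is only the warm-up: your commitment gadget yields an expected ratio of $(k-1)/k$, as you yourself note, and the part that would deliver the tight bound $2/k$ — ``nesting $\Theta(k)$ levels,'' an induction on the nesting depth, concentration over hidden clusters, an exchange argument — is left as a plan rather than carried out. This is a genuine gap, not a bookkeeping issue: composing independent copies of a gadget whose ratio is $(k-1)/k$ can never push the ratio below $(k-1)/k$, so the missing ingredient is structural. What is needed is an instance in which a single wrong decision forfeits essentially \emph{all} future hyperedges (a loss of order $k$), not merely a $1/k$ fraction locally, and your gadget does not have that property even qualitatively.

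The paper obtains exactly this with one chain of $k$ online arrivals rather than a hierarchy of gadgets. The offline nodes are split into blocks $C_1,\dots,C_{k-1}$ with $|C_i|=2(k-i)$; for $i\le k-1$, node $w_i$ brings two disjoint hyperedges, each using $k-i$ nodes of $C_i$ and one node from each earlier block, and a uniformly random one of the two joins the hidden perfect matching $H^*$; the last node $w_k$ brings a single hyperedge. The construction guarantees that if the algorithm ever picks the hyperedge of $w_i$ that is \emph{not} in $H^*$ (probability $1/2$, independent of the past), then the uncovered part of $C_i$ lies inside the picked hyperedge and every hyperedge of every later arrival intersects it, so the algorithm is blocked for the remainder of the instance. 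The expected value of any deterministic algorithm then obeys the elementary recurrence $\val_i = \max\{\val_{i+1},\, 1+\tfrac12\val_{i+1}\}$, giving $\val_1 = 2-2^{-k+1}\le 2$ while $\opt = k$; no concentration bounds or multi-level induction are required. To complete your proposal you would need to exhibit such a ``one mistake blocks everything'' structure explicitly; as written, the claim of the tight constant $2$ rests on an unproven hope that the slack in your gadget can be made to vanish under composition.
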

\begin{proof}
We prove that any randomized integral algorithm is at most $(2 - 2^{-k+1})/k$-competitive. To do so, we make use of Yao's principle \cite{yao1977probabilistic}: it suffices to construct a randomized instance for which any deterministic integral algorithm is at most $(2 - 2^{-k+1})/k$-competitive in expectation. Let us now describe our randomized construction $\mathcal{H} = (V,W,H)$ for any $k \in \mathbb{N}$.
\begin{itemize}
\item The offline nodes are partitioned into $k-1$ blocks: $V = C_1 \cup \dots \cup C_{k-1}$, where $|C_i| = 2(k-i)$ for each $i \in \{1, \dots, k-1\}$, meaning that $|V| = k \: (k-1)$.
\item The instance first consists of $k-1$ phases with online nodes $w_1, \dots, w_{k-1}$ arriving, all of which are incident to $2$ hyperedges. For every $i \in \{1, \dots, k-1\}$, both hyperedges incident to $w_i$ are disjoint on the offline nodes $V$ and they will both contain $(k-i)$ nodes from $C_i$, as well as $1$ node from each $C_j$ for $j \in \{1, \dots, i-1\}$.
\item We now construct a random subset of hyperedges $H_1 \subseteq H$ in the following way. At the arrival of $w_i$ for every $i \leq k-1$, pick one of the two incident hyperedges uniformly at random and put it in $H_1$. We denote by $V(H_1)$ the offline nodes spanned by $H_1$.
\item Our construction will now satisfy the following property. For every $i \in \{1, \dots, k-1\}$, after the arrival of $w_i$ and the random choice described above, we have that 
\begin{align}
\label{eq_invariant}
|C_j \setminus V(H_1)| = k - i \qquad \forall j \in \{1, \dots, i\}.
\end{align}
After a certain phase $i-1 \leq k-2$, the two hyperedges incident to $w_{i}$ in the next iteration are then both constructed as follows: take $k - i$ nodes from $C_i$ and complete it by arbitrarily picking one node from $C_j \setminus V(H_1)$ for every $j \leq i-1$.
\item After phase $k-1$, we have that $|C_j \setminus V(H_1)| = 1$ for every $j \leq k-1$, by invariant \eqref{eq_invariant}. The instance now makes one more online node $w_k$ incident to one hyperedge arrive, whose offline nodes are $C_j \setminus V(H_1)$ for every $j \in \{1, \dots, k-1\}$. Let us also add this hyperedge to $H_1$.
\end{itemize}

Let us first show that \eqref{eq_invariant} holds by induction. In the first phase, both hyperedges partition $C_1$ on the offline nodes, since $|C_1| = 2(k-1)$. One of them is chosen to enter $H_1$, meaning that $C_1\setminus V(H_1) = k-1$ after phase $1$. Let us now fix a phase $i \leq k-1$ and suppose that \eqref{eq_invariant} holds for all previous phases. By construction, $C_i$ is completely covered by the two hyperedges arriving at phase $i$, since $|C_i| = 2(k-i)$ and both of these hyperedges contain $k-i$ nodes from $C_i$. One of these hyperedges enters $H_1$ at the end of phase $i$, meaning that $|C_i \setminus V(H_1)| = k-i$ indeed holds. For any other $C_j$ with $j < i$, note that $|C_j \setminus V(H_1)| = k-i+1$ at the beginning of phase $i$, by induction hypothesis. Both hyperedges coming at phase $i$ intersect $C_j$ at two different nodes, one of which enters $V(H_1)$ by the random choice, meaning that $|C_j \setminus V(H_1)|$ drops by $1$ and equals $k-i$, thus showing \eqref{eq_invariant}.

Observe that, by construction, the hyperedges in $H_1$ are all disjoint from each other. Since we add one hyperedge to $H_1$ for every online node, we get that $\opt(\mathcal{H}) = k$. 

Let us now upper bound the value that any deterministic algorithm can get on this randomized instance. The key observation is that, if the algorithm picks a hyperedge $h \in \delta(w_i)$ which is not placed in $H_1$ for some phase $i \in \{1, \dots, k-1\}$, then it cannot pick any hyperedge arriving in later iterations. This holds, since in that case, $C_i \setminus V(H_1) \subseteq h$, and any hyperedges arriving in later iterations necessarily intersect $C_i \setminus V(H_1)$ by construction. 

Let us denote by $\val_i$ the maximum expected value that a deterministic algorithm can get if we were to start the instance from phase $i$ and go up to phase $k$. Clearly, $\val_k = 1$. For a phase $i \in \{1, \dots, k-1\}$, the algorithm can either choose not to select anything, or it picks a hyperedge and cannot pick anything in later iterations with probability $1/2$. We thus get the following recurrence relation:
\[\val_i = \max \Big\{\val_{i+1}, \frac{1}{2} + \frac{1}{2} (1 + \val_{i+1})\Big\} = \max \Big\{\val_{i+1}, 1 + \frac{1}{2}\val_{i+1}\Big\}.\]
It is easily checked that the solution to this recurrence is a geometric series $\val_{k-i} = \sum_{j = 0}^{i} 2^{-j}$ and thus $\val_1 = \sum_{j = 0}^{k-1}2^{-j} = 2 - 2^{-k+1}$. We have therefore just shown that any algorithm is at most $(2 - 2^{-k+1})/k$ competitive.
\end{proof}

\section{Rounding algorithm for online hypergraph $b$-matching}
\label{app:rounding_algo}
In this section, we consider the online $b$-matching problem on $k$-uniform hypergraphs, in which every (offline and online) node $v$ can be matched to at most $b$ hyperedges. We show that, for $b = \Omega(\log k)$, any fractional algorithm can be converted to a randomized integral algorithm while incurring a small loss in the competitive ratio.

Let $\mathcal{A}$ be a fractional algorithm that is $\rho$-competitive and let $\mathcal{H} = (V,W,H)$ be an online $k$-uniform hypergraph instance. We denote by $x \in [0,1]^H$ the fractional solution constructed by $\mathcal{A}$ on the instance $\mathcal{H}$. The rounding algorithm is now quite simple and is similar to the methods used in \cite{hutchison_b-matching_2012, raghavan_randomized_1987-2, srivastav_weighted_1995}. 

Fix some small $0 < \epsilon < 1$ and initialize two empty sets of hyperedges $S, \Match \gets \emptyset$. Upon the arrival of an online vertex $w\in W$ with $\delta(w) \subseteq H$ and $x_h \in [0,1]$ for every $h \in \delta(w)$, the rounding algorithm is as follows:
\begin{itemize}
    \item For all $h \in \delta(w)$, independently add $h$ to $S$ with probability $x'_h := (1 - \epsilon)x_h$.
    \item If $h$ was added to $S$, add it to $\Match$ as long as it does not violate the degree constraints.
\end{itemize}
The solution outputted is $\Match \subseteq H$. Let us denote this rounding algorithm by $R(\mathcal{A}, \epsilon)$. 
\begin{lemma}
Let $\mathcal{A}$ be a $\rho$-competitive fractional algorithm. The randomized integral algorithm $R(\mathcal{A}, \epsilon)$ achieves a competitive ratio of at least $(1-\epsilon)(1-k \exp(-\epsilon^2 b/3))\cdot \rho$.
\end{lemma}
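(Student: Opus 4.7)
The plan is to bound the expected matching size via
\[\Exp[|\Match|] = \sum_{h \in H}\Pr[h \in S]\cdot\Pr[h \in \Match\mid h \in S],\]
where the first factor equals $(1-\epsilon)x_h$ directly from the construction of $R(\Alg,\epsilon)$. The main work is therefore to establish the uniform lower bound $\Pr[h \in \Match \mid h \in S]\geq 1 - k\exp(-\epsilon^2 b/3)$ for every $h\in H$; combining this with the $\rho$-competitiveness of $\Alg$ and $\opt(\mathcal{H})\leq \optLP(\mathcal{H})$ will immediately yield the claimed ratio.

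First I would note that the indicators $\{\mathbbm{1}[h\in S]\}_{h\in H}$ are mutually independent, since each is sampled once at the arrival of its unique online endpoint and the draws are performed independently of the algorithm's internal state. Fix a hyperedge $h$ and, for each $v\in h$, define $Z_v := |\{h'\in \delta(v)\setminus\{h\} : h'\in S\}|$. Because $\Match\subseteq S$ at every point in time, if $Z_v<b$ for all $v\in h$, then at the instant $h$ is processed every $v\in h$ has $\Match$-degree strictly less than $b$, so $h$ is added to $\Match$. Hence
\[\{h\in S,\, h\notin \Match\}\;\subseteq\; \bigcup_{v\in h}\{Z_v\geq b\},\]
and a union bound over the $k$ vertices of $h$ reduces the problem to bounding $\Pr[Z_v\geq b]$ for a single $v$.

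For a fixed $v$, the random variable $Z_v$ is a sum of independent Bernoullis with $\Exp[Z_v]\leq \sum_{h'\in \delta(v)} x'_{h'}\leq (1-\epsilon)b$ by feasibility of $x$ for the $b$-matching LP. Writing $b = (1+\delta)\Exp[Z_v]$, the extremal case $\Exp[Z_v] = (1-\epsilon)b$ gives $\delta = \epsilon/(1-\epsilon)\leq 1$ (where $\epsilon<1/2$ is used), and the standard multiplicative Chernoff bound yields
\[\Pr[Z_v\geq b]\leq \exp\!\left(-\frac{\delta^2 \Exp[Z_v]}{3}\right)\leq \exp\!\left(-\frac{\epsilon^2 b}{3(1-\epsilon)}\right)\leq \exp\!\left(-\frac{\epsilon^2 b}{3}\right).\]
Plugging everything together,
\[\Exp[|\Match|]\geq (1-\epsilon)(1-k\exp(-\epsilon^2 b/3))\sum_{h\in H}x_h\geq (1-\epsilon)(1-k\exp(-\epsilon^2 b/3))\,\rho\,\opt(\mathcal{H}),\]
which is the claimed bound.

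The only subtle steps are book-keeping: arguing that $h \in S$ together with ``$Z_v < b$ for all $v\in h$'' already forces $h$ to be accepted into $\Match$ (which uses $\Match\subseteq S$ at all times, so that the $\Match$-degree of any $v$ right before $h$ is considered is dominated by $Z_v$, irrespective of the arrival order of the other hyperedges in $\delta(v)$), and checking that the Chernoff bound is worst when $\Exp[Z_v]$ is as large as possible, namely $(1-\epsilon)b$. Both are straightforward; the rest is a direct computation.
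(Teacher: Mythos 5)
Your proof is correct and follows essentially the same route as the paper: bound $\Pr[h\in S\setminus\Match\mid h\in S]$ by a union bound over the $k$ vertices of $h$, control each vertex's excess degree in $S$ via a multiplicative Chernoff bound using $\Exp[Z_v]\leq(1-\epsilon)b$, and then sum $\Pr[h\in\Match]\geq(1-\epsilon)x_h\bigl(1-k\exp(-\epsilon^2 b/3)\bigr)$ over all hyperedges. Your explicit justification that $\Match\subseteq S$ forces acceptance of $h$ whenever all $Z_v<b$ is a nice touch, but the argument is the same as the paper's.
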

\begin{proof}
    Consider an arbitrary node $v \in V \cup W$. To bound the probability that $v$ is matched to more than $b$ hyperedges in $S$, we use a Chernoff bound \cite[Theorem 1.10.1]{doerr_probabilistic_2020-1}. Fix a node $v$ and a hyperedge $h$, and let $X_{v, h}=\sum_{h'\in \delta(v)\setminus \{h\}}\mathbf{1}_{\{h'\in S\}}$. Note that $\mu:=\Exp[X_{v,h}]\leq (1-\epsilon)b$. If $\mu=0$, it is clear that $x_{v,h}\leq b$, so assume $\mu>0$. We now have:
    \begin{align*}
        \Pr\left[X_{v,h} \geq b \right]&\leq \exp\left(-\min\left(\left( \frac{b-\mu}{\mu} \right)^2,\frac{b-\mu}{\mu}\right) \mu /3\right)
        \\&\leq \exp\left(-\min\left(\frac{(b-\mu)^2}{\mu},b-\mu\right) /3\right)\\
        &\leq \exp\left(-\min\left( \epsilon^2b,\epsilon b \right) /3\right)
        \leq  \exp(-\epsilon^2 b/3),
    \end{align*}
    \todo[inline]{CK: Which version of Chernoff bound are you using (the book sits behind a paywall)? I don't think it holds when $b\gg \mu$.}
    where the second inequality follows from $b-\mu \geq \epsilon b$ and the last inequality from $b/\mu \geq 1$.
    We now upper bound the probability that a hyperedge $h$ cannot be included in $\Match$ because of the degree constraints:
    \begin{align*}
        \Pr[h\in S\setminus \Match \mid h\in S] &\leq \sum_{v\in h}\Pr\left[X_{v,h} \geq b \right]\leq k \exp(-\epsilon^2 b /3).
    \end{align*}
    Hence, we have:
    \begin{align*}
        \Exp[|\Match|] = \sum_{h\in H}\Pr[h\in \Match] &= \sum_{h\in H}\Pr[h\in S] \left(1-\Pr[h\in S\setminus \Match \mid h\in S]\right) \\ 
        &\geq \sum_{h\in H}x'_h\left(1 - k \exp(-\epsilon^2 b /3)\right)\\
        &= \left(1 - k \exp(-\epsilon^2 b /3)\right)(1-\epsilon)\sum_{h\in H}x_h \\ 
        &\geq \left(1 - k \exp(-\epsilon^2 b /3)\right)(1-\epsilon)\rho \: \optLP.
    \end{align*}\end{proof}

If $b=C\cdot \log(k)$ for some $C> 6$, then by choosing $\epsilon = \sqrt{6/C}$ we get that the competitive ratio is at least $(1-\sqrt{6/C})(1-\frac{1}{k})\rho$. \todo[inline]{CK: Did you use $\log (Ck)\leq \log C\log k$ here? If so, the lower bound of $C$ should be larger. Note also that you assumed $\epsilon<1/2$. Is this upper bound used anywhere?} By using the $\Omega(1/\log k)$-competitive fractional algorithm from \cite{buchbinder2009online}, this 
gives an $\Omega(1/\log k)$-competitive integral algorithm for this setting.

\end{document}